%% file: main.tex
\documentclass[11pt]{amsart}

\usepackage[T1]{fontenc}
\usepackage[utf8]{inputenc}
\usepackage{lmodern}
\usepackage{amsmath,amssymb,amsthm,mathtools}
\usepackage{microtype}
\usepackage{booktabs}
\usepackage{array}
\usepackage{enumitem}
\usepackage{xcolor}
\usepackage{hyperref}
\usepackage{url}

\hypersetup{
  colorlinks=true,
  linkcolor=blue!55!black,
  citecolor=blue!55!black,
  urlcolor=blue!55!black,
  pdftitle={Reinhardt's Maximum-Perimeter Polygon Problem for n=16, 32, and 64},
  pdfauthor={Jizhou Guo and Yitao Luo}
}
\setlist{nosep}
\allowdisplaybreaks
\providecommand{\tightlist}{%
  \setlength{\itemsep}{0pt}\setlength{\parskip}{0pt}}

\newtheorem{theorem}{Theorem}[section]
\newtheorem{proposition}[theorem]{Proposition}
\newtheorem{lemma}[theorem]{Lemma}

\theoremstyle{definition}

\theoremstyle{remark}

\newcommand{\norm}[1]{\left\lVert #1\right\rVert}
\newcommand{\abs}[1]{\left\lvert #1\right\rvert}
\newcommand{\perim}{\operatorname{per}}
\newcommand{\diam}{\operatorname{diam}}

\title[Reinhardt's problem for $n=16,32,64$]{Reinhardt's Maximum-Perimeter Polygon Problem for $n=16$, $32$, and $64$}
\author{Jizhou Guo}
\address{Dots Studio, RedNote, Shanghai, China}
\email{mitsuha2021b@gmail.com}
\urladdr{https://aster2024.github.io/}
\thanks{ORCID: \href{https://orcid.org/0009-0001-0699-9164}{0009-0001-0699-9164}.}
\author{Yitao Luo}
\address{University of Science and Technology of China, Hefei, China}
\email{luoyt.kd@mail.ustc.edu.cn}
\subjclass[2020]{Primary 52A40; Secondary 52-08, 52A10}
\keywords{small polygon, diameter, perimeter, Reinhardt polygon, difference body, computer-assisted proof, interval arithmetic, meet-in-the-middle, KKT conditions}
\date{August 25, 2026}

\begin{document}

\begin{abstract}
A convex polygon is called small if its diameter is at most one. Reinhardt proved the universal perimeter bound
\[
  \perim(P)\le U_n:=2n\sin\frac{\pi}{2n},
\]
and the bound is attained whenever $n$ has a nontrivial odd divisor. The remaining power-of-two cases have resisted exact solution beyond $n=8$. We give computer-assisted proofs of the first three cases, $n=16,32,64$, and in each case prove uniqueness of the maximizing congruence class. The proof architecture is common to all three cases: pass to the difference body $P-P$; encode its reconstruction by a sign code; prove that every global maximizer is saturated, so all difference-body vertices lie on the unit circle; localize every competitive configuration near the regular angle vector; exhaustively screen the sign codes using exact arithmetic; eliminate all nonwinning dihedral orbits; and prove uniqueness inside the winning code by strong convexity and a quantitative KKT argument. The exact certificates cover $2^{15}$ normalized codes for $n=16$, $2^{31}$ normalized codes for $n=32$, and all $2^{64}$ half-codes for $n=64$, leaving respectively $16$, $96$, and $896$ survivors before orbit elimination. The accompanying source package contains the verifiers, recorded outputs, hashes, and separate computational cross-checks.
\end{abstract}

\maketitle

\section{Introduction}

A planar convex $n$-gon $P$ is \emph{small} if $\diam(P)\le 1$. The problem of maximizing $\perim(P)$ under this constraint goes back to Reinhardt \cite{reinhardt1922}. His upper bound
\begin{equation}
  \perim(P)\le U_n:=2n\sin\frac{\pi}{2n}
  \label{eq:reinhardt-bound}
\end{equation}
is sharp whenever $n$ is not a power of two: a nontrivial odd divisor of $n$ yields a Reinhardt polygon attaining equality. The combinatorics and enumeration of such equality cases are by now well developed; see, for example, \cite{haremossinghoff2013,haremossinghoff2019}.

For $n=2^s$ with $s\ge 4$, equality in \eqref{eq:reinhardt-bound} is impossible and the exact optimum has remained unknown. Analytic and numerical constructions produced increasingly sharp lower bounds and high-precision candidates \cite{mossinghoff2006,mossinghoff2008,binganeaudet2022,bingane2023}. Global-optimization formulations sharpened this picture, in some cases under unproved structural conjectures \cite{bingane2022,mulanskypotschka2025}. In particular, the zonogon formulation of Mulansky and Potschka computes highly accurate candidates for $n=16,32,64,128$, while explicitly leaving the global proof problem open \cite{mulanskypotschka2025}. The publisher's correction concerns production typesetting and does not alter those mathematical conclusions \cite{mulanskypotschka2025correction}.

The present work proves the cases $n=16,32,64$ in one common framework. The central difficulty is not finding a strong numerical approximation. It is proving simultaneously that
\begin{enumerate}[label=(\roman*)]
  \item a global maximizer has a fully saturated difference body;
  \item every possible sign code, not merely symmetric codes, is covered;
  \item all but one dihedral code orbit are excluded by certified inequalities; and
  \item the surviving fixed-code nonlinear problem has at most one high-perimeter KKT point.
\end{enumerate}

\paragraph{Contributions.}
The manuscript establishes the following results, with the finite steps backed by the ancillary certificates.
\begin{enumerate}[label=(\alph*)]
  \item A reconstruction lemma turns a centrally symmetric difference body and a closed sign code into a feasible small polygon.
  \item A two-sided convex perturbation shows that at most one half-vertex of the difference body can lie inside the unit disk.
  \item Cauchy's perimeter formula, quantitative near-regularity, and KKT stationarity exclude that last possible interior vertex for each of $n=16,32,64$.
  \item A regular-residual expansion reduces the global code search to exact finite certificates.
  \item For $n=64$, a reflection-pair decomposition converts all $2^{64}$ codes into two complementary ternary subset-sum problems of size $3^{16}$ per block.
  \item A uniform strong-convexity and multiplier argument proves uniqueness in the winning code orbit.
\end{enumerate}

\paragraph{Verification.}
The exact programs have been rerun, and separately organized implementations reproduce the survivor sets and the high-precision stationary points. The source package records the verifier inputs, outputs, manifests, and cross-check implementations used for the three finite certificates.

\section{Main theorems and numerical identification}

Let
\[
  M_n=\sup\{\perim(P): P\text{ is a convex }n\text{-gon and }\diam(P)\le 1\}.
\]
Congruence below allows translation, rotation, and reflection; cyclic relabeling is immaterial.

\begin{theorem}[$n=16$]
The value $M_{16}$ is attained by a unique congruence class. In the saturated difference-body model, its sign code is the dihedral class of
\[
  (+--+-++-)^2.
\]
Numerically,
\[
  M_{16}=3.13654771648660738608596703194\ldots.
\]
\end{theorem}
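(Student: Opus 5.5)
The plan follows the architecture announced in the introduction, specialized to $n=16$.

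\emph{Reduction to the difference body.} Let $P$ be a small convex $16$-gon. Its difference body $D=P-P$ is a centrally symmetric polygon with at most $32$ vertices, lying in the closed unit disk because $\diam(P)\le 1$. By Cauchy's formula, $\perim(D)=2\perim(P)$. The edges of $D$ are the edge vectors $e_1,\dots,e_{16}$ of $P$ together with their negatives, sorted by angle. Choosing one representative from each antipodal pair $\pm e_j$ gives sixteen half-edge vectors $v_1,\dots,v_{16}$. $P$ is then recovered by the sign code $\sigma\in\{\pm1\}^{16}$ with $e_j=\sigma_j v_{\pi(j)}$. The closing condition $\sum_j \sigma_j v_j=0$ forces the code to be "closed." Conversely, the reconstruction lemma, contribution (a), turns any admissible pair $(D,\sigma)$ back into a feasible small polygon. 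Thus $M_{16}$ equals one half of the maximum of $\perim(D)$ over pairs (vertices of $D$ in the unit disk, closed code $\sigma$). The maximum is attained by compactness.

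\emph{Saturation.} First I apply the two-sided convex perturbation, contribution (b). Near any half-vertex of $D$ strictly inside the disk, the perimeter is a convex function of that vertex's position along a suitable direction. Moving the vertex both ways keeps the closing condition and strictly increases the perimeter, unless at most one half-vertex is interior. To exclude that last interior vertex, I would argue as follows:
\begin{itemize}
\item A maximizer has perimeter at least that of the known numerical candidate. This forces every edge angle to lie within an explicit $\varepsilon$ of the regular angle $\pi/16$.
\item The KKT conditions at the maximizer give multipliers on the circle constraints.
\item The one interior vertex would then carry a zero multiplier. This contradicts stationarity combined with the near-regular angle bounds, which I would check by a small interval computation.
\end{itemize}
So every maximizer is saturated: all $32$ vertices of $D$ lie on the unit circle. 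The configuration is then parametrized by the angles $\theta_1,\dots,\theta_{16}>0$ with $\sum\theta_j=\pi$. The objective is $2\sum_j\sin(\theta_j/2)$, with two linear-trigonometric closure equations determined by $\sigma$.

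\emph{Code screening.} Write $\theta_j=\pi/16+\delta_j$ and expand the closure constraint around the regular configuration. The leading term is $\sum_j\sigma_j\omega^j$ with $\omega=e^{i\pi/16}$, up to the signed-vertex convention. If this residual is large, then achieving closure costs a perimeter loss of order $|\text{residual}|^2$. That loss beats the gap $U_{16}-M_{16}^{\text{cand}}$, and the inequality can be certified in exact arithmetic in $\mathbb{Q}(\omega)$ by bounding the quadratic remainder. Running this over the $2^{15}$ normalized codes, with the global sign fixed, leaves $16$ survivors. These fall into a few dihedral orbits. For each nonwinning orbit, I solve the fixed-code problem: the objective is strictly concave in $\theta$ on the localized box, so the maximizer is unique. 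A certified interval enclosure of its value lies strictly below the value attained by $(+--+-++-)^2$.

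\emph{Uniqueness in the winning code.} On the localized box, the objective $\sum\sin(\theta_j/2)$ is uniformly strongly concave. The constraint set is a small smooth perturbation of an affine set. A quantitative KKT argument then gives uniqueness:
\begin{itemize}
\item Bound the multipliers using the near-regular angle estimates.
\item Show that the Hessian of the Lagrangian is negative definite on the tangent space, uniformly over the box.
\item Conclude that the box contains at most one KKT point with high perimeter.
\end{itemize}
Newton iteration with interval verification (Krawczyk) then encloses this point and yields the stated digits of $M_{16}$. Dihedral symmetries of the code account for the congruence class, so the maximizer is unique up to congruence.

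I expect the main obstacle to be the saturation step, specifically excluding the single remaining interior vertex. The convex-perturbation argument gives no strict gain there, so the exclusion needs sharp quantitative near-regularity from the perimeter gap, combined with a delicate multiplier sign analysis. The code screening is large but mechanical once the residual inequality is established.
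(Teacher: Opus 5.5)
Your outline follows the paper's architecture, but the saturation step, which you yourself flag as the main obstacle, has no mechanism yet, and some of the ingredients you name are not the ones that work.

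\textbf{Localization before saturation.} Before saturation, ``edge angles near $\pi/16$'' is not the relevant near-regularity. The paper first rules out degenerate $Z$: at most $30$ vertices gives $\perim(P)\le 30\sin(\pi/30)<L_0$. Here $L_0=3.1365475$ must be a \emph{certified} feasible value (Bingane's explicit $C_{16}$), not a numerical candidate. It then applies Cauchy's formula $\perim(Z)=\sum_j 2r_j\sin(\omega_j/2)\cos\delta_j$ with a termwise nonnegative deficit decomposition. This bounds every normal-cone width $\omega_j$, every radius $r_j$, and every offset $\delta_j=\eta_j-\phi_j$ between normal-cone midpoint and polar direction. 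The result is a projective separation $>0.17$ between distinct half-vertices.

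\textbf{Excluding the interior vertex.} The exclusion uses no multiplier signs. A single interior vertex with $a_j\neq0$ cannot be moved alone, which is why the perturbation lemma moves two vertices jointly. Its second clause, $a_r\neq0$ for a lone interior vertex, is what MFCQ needs: push active vertices radially inward and repair $\sum_j a_jz_j=0$ with $z_r$. Stationarity at the interior vertex then reads $2\sin(\omega_r/2)e^{i\eta_r}=a_r\lambda$. This pins the closure multiplier $\lambda$ to direction $\eta_r$ with $\norm{\lambda}=\sin(\omega_r/2)$. Some other $j$ has $a_j\neq0$, since otherwise $z_r=0$. Projecting its stationarity equation onto the tangent $ie^{i\phi_j}$ kills the unknown disk multiplier and gives $\abs{\sin(\eta_r-\phi_j)}=\frac{\sin(\omega_j/2)}{\sin(\omega_r/2)}\abs{\sin\delta_j}$. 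That forces $\phi_j$ projectively within $0.0142$ of $\phi_r$, contradicting the separation.

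\textbf{Uniqueness.} Uniform negative definiteness of the Lagrangian Hessian on tangent spaces only makes each KKT point a strict, isolated local maximizer. It does not exclude two KKT points in the box, because the fixed-code feasible set is a nonconvex codimension-two manifold. The same flaw affects your ``strictly concave objective, hence unique maximizer'' step for nonwinning orbits. That step is vacuous for $n=16$ anyway, since the $16$ survivors form a single dihedral orbit.

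The paper instead compares two KKT pairs along the straight segment $d=u-v$, using three facts:
\begin{itemize}
\item Full-space strong convexity $\nabla^2(-F)\succeq mI$ with $m>0.0018$.
\item A uniform bound $\sigma_{\min}(Dg)>4.42$. It comes from $\sigma_{\min}(Dg)^2=2(11-\abs{\sum_{j\in S}e^{2i\phi_j}})$, the exact switch-set identity $\sum_{j\in S}e^{2ij\pi/16}=-1$, and the near-regular radius, and it gives $\norm{y}<0.000151$.
\item Feasibility $g(u)=g(v)=0$, which yields $\norm{Dg(u)d},\norm{Dg(v)d}\le\norm{d}^2$.
\end{itemize}
Together these give $m\norm{d}^2\le(\norm{y_u}+\norm{y_v})\norm{d}^2<0.000302\,\norm{d}^2$, so $d=0$.

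Your Krawczyk enclosure would go beyond the paper, which treats the decimal value of $M_{16}$ only as non-rigorous numerical identification.
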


\begin{theorem}[$n=32$]
The value $M_{32}$ is attained by a unique congruence class. A representative of the winning sign-code orbit is
\begin{center}
\ttfamily\small
+++-+-+-++--+--+--++-+-+-+++--++
\end{center}
and an axial representative is
\begin{center}
\ttfamily\small
+-++--+-+-+---++--+++-+-+-++--+-
\end{center}
Numerically,
\[
  M_{32}=3.1403311569546193658254013805774586723120530983\ldots.
\]
\end{theorem}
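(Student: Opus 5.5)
We prove the $n=32$ theorem with the same pipeline as for $n=16$; the only changes are the size of the code space and the tighter numerical margins. Throughout, $P$ is a small convex $32$-gon and $D=P-P$ is its difference body. $D$ is centrally symmetric with at most $64$ vertices, and its support function satisfies $h_D\le 1$. By Cauchy's formula,
\[
  \perim(P)=\tfrac12\perim(D).
\]
The reconstruction lemma works in the converse direction. Take a centrally symmetric polygon with $32$ half-edges $e_1,\dots,e_{32}$ in angular order, and a closed sign code $\sigma\in\{\pm\}^{32}$, meaning $\sum_j\sigma_j e_j=0$. Then $P$ is recovered by traversing $\sigma_j e_j$ in the order induced by the signs. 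Consequently, maximizing $M_{32}$ amounts to maximizing the half-perimeter of $D$ over pairs (vertex configuration, closed code).

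\textbf{Step 1 (saturation).} Let $P$ be a global maximizer. By the two-sided convex perturbation, at most one half-vertex of $D$ lies strictly inside the unit disk. Suppose one does. Moving it outward along the bisector of its two neighbors increases $\perim(D)$ to first order, so the closure constraint must be active. We then write the KKT system with one multiplier vector for closure. By quantitative near-regularity, the perimeter is at least the lower bound given by the numerical candidate, which forces every edge angle to lie within $O(U_{32}-M_{32})$ of $\pi/32$. With angles that close to regular, the KKT system has no solution with an interior vertex. This is certified by interval bounds on the multiplier equations. Hence every vertex of $D$ lies on the unit circle. $D$ is then determined by an angle vector $\theta\in\mathbb{R}^{32}_{>0}$ with $\sum\theta_j=\pi$, and
\[
  \perim(P)=2\sum_j\sin(\theta_j/2),
\]
subject to the closure condition $\sum_j\sigma_j e_j(\theta)=0$.

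\textbf{Step 2 (localization and code screening).} Write $\theta=\pi/32+\delta$ and expand the closure equation about the regular angle vector. At zeroth order we obtain a residual $r(\sigma)=\sum_j\sigma_j\omega^{j}$, the signed sum of the regular unit edges, while the perimeter deficit is quadratic in $\delta$. A code can be competitive only if a small $\delta$ cancels $r(\sigma)$. The linear response to $\delta$ is bounded, so the deficit is at least $c\,|r(\sigma)|^2$ up to a certified cubic remainder. Comparing with the candidate value yields an explicit threshold $|r(\sigma)|\le\rho_{32}$. We then run the exact screen over all $2^{31}$ normalized codes, fixing $\sigma_1=+$ by global sign reversal. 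Each residual is an element of $\mathbb{Z}[\zeta_{64}]$, so the test $|r|^2\le\rho_{32}^2$ is decided exactly. A meet-in-the-middle split into two $16$-term halves with sorted partial sums keeps this feasible. The screen leaves $96$ survivors.

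\textbf{Step 3 (orbit elimination).} We group the $96$ survivors into dihedral orbits. For each orbit other than the winner, we solve the fixed-code problem numerically and then certify an upper bound below a certified lower bound for the winning code. The upper bound comes from a concave relaxation on an interval box around the localized region. I expect the main obstacle to be here: for $n=32$ the gaps between orbits are tiny, possibly of order $10^{-12}$ or smaller. A second-order bound may therefore be too weak, and I would first try a local Newton–Kantorovich enclosure of every KKT point of each losing code. Its value is compared with the winner's, and boundary and box-edge cases are handled separately by monotonicity.

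\textbf{Step 4 (uniqueness).} Within the winning orbit, $-\sum_j\sin(\theta_j/2)$ is strongly convex on the localized box, with modulus at least $\tfrac18\sin(\pi/64-\epsilon)$. The closure constraint has a full-rank Jacobian there. A quantitative KKT argument then gives a certified enclosure of the multipliers and shows that there is at most one stationary point of high perimeter. Interval Newton confirms that this point exists and matches the stated digits. Two codes in the same dihedral orbit give congruent polygons, so the maximizing congruence class is unique. Reflecting the first representative produces the axial representative.
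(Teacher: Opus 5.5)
Your pipeline is the paper's: difference body, reconstruction, the one-interior-vertex perturbation, KKT saturation, localization, a regular-residual screen with a $16+16$ meet-in-the-middle over $2^{31}$ codes leaving $96$ survivors, and strong convexity plus a multiplier bound for uniqueness. However, the saturation step is asserted rather than argued. In orbit elimination you also reach for a much heavier tool because you misjudge why the simple one works.

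\textbf{Saturation.} ``Certified by interval bounds on the multiplier equations'' is a placeholder, not an argument. The configuration is high-dimensional and the code is still unknown, so nothing finite can be enclosed until the mechanism is isolated. The paper's mechanism runs as follows.
MFCQ holds: push the active vertices radially inward and repair closure with the free vertex $z_r$, whose coefficient is $a_r=\pm2$.
Stationarity at $z_r$ gives $g_r=a_r\lambda$ with $g_r=2\sin(\omega_r/2)e^{i\eta_r}$, so $\lambda$ is projectively parallel to $\eta_r$.
Some $j\neq r$ has $a_j\neq0$, since otherwise $a_rz_r=0$.
Projecting the KKT equation at the circle vertex $z_j$ onto its tangent kills the disk multiplier and gives $\abs{\sin(\eta_r-\phi_j)}=\frac{\sin(\omega_j/2)}{\sin(\omega_r/2)}\abs{\sin\delta_j}$.
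Put $\mu=\pi/32$. The one-width Jensen envelope and the nonnegative deficit decomposition give $\mu/2<\omega_j<3\mu/2$ and $\abs{\delta_j}<\mu/16$. Hence $\phi_j$ lies within $(2+3\pi)\mu/32<3\mu/8$ of $\phi_r$ in $\mathbb{RP}^1$, contradicting the $3\mu/8$ separation of distinct half-vertices.

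Three smaller points in this step:
(i) Angle deviations are $O(\sqrt{U_{32}-M_{32}})$, not $O(U_{32}-M_{32})$; only $1-r_j$ is linear in the deficit.
(ii) You omit the exclusion of a degenerate difference body, which reconstruction needs. With at most $62$ vertices, $\perim(P)\le U_{31}$, and $U_{32}-U_{31}\ge\pi^2/(6\cdot32^3)>\varepsilon$.
(iii) The lower bound used in Step 1 must come from an exactly feasible polygon certified beforehand; your interval Newton would do. The paper uses the axial family instead, where closure reduces to one real equation $H(t)=0$. The intermediate value theorem then gives a root with certified deficit $<1.3363\cdot10^{-13}<\varepsilon=1.35\cdot10^{-13}$.

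\textbf{Orbit elimination.} No losing fixed-code problem needs to be solved. The paper reuses your Step~2 inequality with code-specific constants.
Exact bounds give $\abs{b_A}>2.31\cdot10^{-5}$, $\abs{b_C}>1.66\cdot10^{-5}$, and $\sigma_c^2=\lambda_{\max}(B_cD^{-1}B_c^T)<16.37$.
Hence $\norm{x}_2>(\abs{b_c}-256R_0^2)/4.1$ at any feasible point.
Strong concavity then forces deficits above $3.56\cdot10^{-13}$ and $1.84\cdot10^{-13}$, both exceeding $\varepsilon$.

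The absolute smallness of the inter-orbit gaps is not the obstruction you fear. Everything scales with the deficit, and the remainder $256R_0^2\approx3\cdot10^{-9}$ is negligible beside $\abs{b_c}\sim10^{-5}$, so only constant factors matter. The decisive factor is replacing the uniform constant $8\cos(\pi/64)$ by the code-specific $\sigma_c\approx4$. With the uniform constant and the same residual bounds, the estimate yields only about $0.9\cdot10^{-13}$ and $0.5\cdot10^{-13}$, both below $\varepsilon$.

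Your Newton--Kantorovich route could be made rigorous, but only with two additions for each losing code. You would need the quantitative KKT uniqueness argument on a region large enough to contain its KKT point; since that point's deficit exceeds $\varepsilon$, it need not lie in your localized region. You would also need a boundary estimate. An enclosure of one KKT point says nothing by itself about the other feasible points of that code.
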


\begin{theorem}[$n=64$]
The value $M_{64}$ is attained by a unique congruence class. A representative of the winning saturated half-code is
\begin{center}
\ttfamily\footnotesize
-++++++-----+--+-+++--+---+--+++---++-+++-++---+-++-+++++------+
\end{center}
Numerically,
\[
  M_{64}=3.14127725093277286806199141550246829795626209630809641\ldots.
\]
\end{theorem}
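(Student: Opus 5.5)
The plan for the $n=64$ theorem follows the architecture laid out in the introduction. We specialize each step to $n=64$ and spend most of the effort on the finite certificate.

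\textbf{Step 1 (reduction to the difference body).} For a small $64$-gon $P$, Cauchy's formula gives $\perim(P)=\tfrac12\perim(P-P)$. The body $D=P-P$ is centrally symmetric with $\diam$-constraint equivalent to $D\subseteq$ the unit disk, and has at most $128$ vertices. The reconstruction lemma (contribution (a)) shows that every centrally symmetric $D$ together with a closed sign code yields a feasible small polygon. The problem therefore becomes maximizing the perimeter of $D$ over half-edge angle vectors $\theta\in\mathbb{R}^{64}$ with $\sum\theta_j=\pi$ and the closure condition $\sum_j \varepsilon_j e^{i\Theta_j}=0$. Here $\varepsilon\in\{\pm1\}^{64}$ is the half-code and $\Theta_j$ are the cumulative angles.

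\textbf{Step 2 (saturation).} The two-sided convex perturbation argument (contribution (b)) shows that at most one half-vertex of $D$ can lie strictly inside the unit disk. Moving two interior vertices in opposite directions keeps closure and strictly increases perimeter by convexity.

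\textbf{Step 3 (excluding the last interior vertex).} Suppose one interior vertex remains. Comparing with the known lower bound (the numerical candidate exceeds $U_{64}-\delta$ for an explicit $\delta$) forces $\theta$ into a small ball around the regular vector $(\pi/64,\dots,\pi/64)$. In that ball the KKT stationarity conditions for the interior vertex give a linear relation among $\varepsilon_j\cos$-type terms. We would show this relation is incompatible with the near-regular bound, using a first-order expansion with a certified remainder.

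\textbf{Step 4 (localization and the code screen).} With $D$ saturated, $\perim(D)=\sum 2\sin(\theta_j/2)\cdot 2$ is strictly concave in $\theta$. The deficit $U_{64}-\perim$ is at least $c\norm{\theta-\theta^{\mathrm{reg}}}^2$. The closure defect at the regular vector, $r(\varepsilon)=\abs{\sum_j\varepsilon_j\omega^j}$ with $\omega=e^{i\pi/64}$, must then be at most $C\,\delta^{1/2}$, since closure is Lipschitz in $\theta$ (the regular-residual expansion, contribution (d)). To screen all $2^{64}$ half-codes we use the reflection-pair decomposition (contribution (e)): pairing indices $j$ and $64-j$ makes the real and imaginary parts separate. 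The pairs contribute values in $\{-2,0,2\}\cdot\cos$ or $\sin$, giving ternary sums. We split into blocks of $16$ pairs, enumerate $3^{16}$ partial sums per block in exact cyclotomic (or rigorously rounded) arithmetic, and meet in the middle. We sort one block and query the complementary block for sums within the residual tolerance. The output is the $896$ surviving half-codes.

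\textbf{Step 5 (orbit elimination).} We group the survivors into dihedral orbits. For each nonwinning orbit we solve the fixed-code problem numerically and then certify, by interval arithmetic on the concave objective over the constrained local ball, an upper bound strictly below the perimeter of the displayed candidate.

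\textbf{Step 6 (uniqueness in the winning code).} In the winning orbit, strong concavity of $\perim$ combined with a quantitative KKT argument (contribution (f)) shows there is at most one high-perimeter KKT point. Concretely, we bound the curvature of the closure constraint times multipliers by the concavity modulus, so the Lagrangian is strictly concave on the feasible region. A certified Newton/Krawczyk enclosure then gives existence and the stated digits of $M_{64}$.

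The main obstacle is Step 4: producing a residual tolerance sharp enough that the $3^{16}\times3^{16}$ meet-in-the-middle leaves only $896$ survivors, with rigorous control of rounding. We would try exact integer scaling of the cosines with a proven error budget first.
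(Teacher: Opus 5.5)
\textbf{The gap is in Step~3, the exclusion of the last interior half-vertex.}

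You localize ``$\theta$'' in a ball around the regular angle vector and then expand. That parametrization (edge lengths $2\sin(\theta_j/2)$, vertices $e^{i\Theta_j}$) exists only once every vertex of $P-P$ is known to lie on the unit circle, and that is exactly what Step~3 has to prove. The phrase ``a linear relation among $\varepsilon_j\cos$-type terms'' that is ``incompatible with the near-regular bound'' names no actual mechanism.

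The paper argues in vertex coordinates, in the following order.
\begin{itemize}
\item It first shows that the maximizer's difference body has exactly $128$ genuine vertices, because $U_{64}-U_{63}$ exceeds the certified deficit bound $\varepsilon=2.84\cdot10^{-23}$. You need this before any sign code is even defined.
\item Cauchy's formula $\perim(Z)=\sum_j2r_j\sin(\omega_j/2)\cos\delta_j$ and its nonnegative deficit decomposition give \emph{pointwise} bounds $\mu/2<\omega_j<3\mu/2$, $1-r_j<128\varepsilon$ and $\abs{\delta_j}<\mu/16$, where $\mu=\pi/64$. Hence any two distinct half-vertices are more than $3\mu/8$ apart projectively.
\item The perturbation lemma also gives $a_r\neq0$ for the interior vertex $z_r$. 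This is what makes MFCQ hold, so that KKT can be invoked at all.
\item The decisive step: because $z_r$ is free, $g_r=a_r\lambda$ forces the closure multiplier to be \emph{large}, namely $\norm{\lambda}=\sin(\omega_r/2)$, with projective direction $\eta_r$.
\item Some other index $j$ has $a_j\neq0$, since otherwise $a_rz_r=0$. Projecting the KKT equation at the circle vertex $z_j$ onto $ie^{i\phi_j}$ removes the disk multiplier. It gives $\abs{\sin(\eta_r-\phi_j)}=\frac{\sin(\omega_j/2)}{\sin(\omega_r/2)}\abs{\sin\delta_j}<3\abs{\sin\delta_j}$.
\item So $\phi_j$ lies projectively within $(2+3\pi)\mu/32<3\mu/8$ of $\phi_r$, contradicting the separation above.
\end{itemize}
Without this argument saturation is unproved. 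Steps~4--6 all take place in the saturated angle model, so they have no foundation.

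\textbf{Secondary points.}
\begin{itemize}
\item Your closure condition is misstated. It should be $\sum_j\varepsilon_j(e^{i\Theta_{j+1}}-e^{i\Theta_j})=0$, equivalently the vertex form with coefficients $a_j=\varepsilon_{j-1}-\varepsilon_j$, not $\sum_j\varepsilon_je^{i\Theta_j}=0$. The two agree, up to the factor $\xi-1$, only at the regular point, so your linearization and Jacobian must use the correct form.
\item Step~5 does not say how interval arithmetic would bound a concave objective under a nonconvex constraint on a $63$-dimensional ball. The paper does no fixed-code optimization for the losing orbits. A certified lower bound on $\abs{b_c}$ and the Gershgorin bound $\sigma_c^2=\lambda_{\max}(B_cD^{-1}B_c^T)<36$ force $\norm{x}_2>(\abs{b_c}-512R_0^2)/6$. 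Strong concavity then turns this into a deficit above $\varepsilon$.
\item The obstacle you flag in Step~4 is resolved analytically, not by rounding control. The twisted Dirichlet lemma supplies the uniform constant $\sqrt{2n}\cos(\pi/2n)<23/2$, which gives $\abs{b_c}<8.2\cdot10^{-10}$. The count $896$ is specific to that threshold.
\item Your pairing $j\leftrightarrow64-j$ is a valid alternative to the paper's $j\leftrightarrow63-j$. It uses $\xi^{64-j}=-\overline{\xi^{j}}$ and leaves $c_0$ and $c_{32}$ unpaired. However, the real and imaginary ternary searches must then be joined through complementary support masks, and your description omits this step.
\item The lower bound feeding Step~3 must come from an exactly feasible polygon, not a numerical candidate. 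The paper gets it from the intermediate value theorem on an axially symmetric one-parameter family. It never needs a Krawczyk enclosure: existence comes from compactness, and the decimals are only a numerical identification.
\end{itemize}
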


The decimal values identify the maximizers but are not used as floating-point proof decisions. The global certificates use the thresholds summarized in Table~\ref{tab:summary}.

\begin{table}[ht]
\centering
\caption{Summary of the three exact finite certificates.}
\label{tab:summary}
\begin{tabular}{@{}>{\centering\arraybackslash}p{0.08\textwidth}p{0.27\textwidth}p{0.22\textwidth}p{0.18\textwidth}p{0.13\textwidth}@{}}
\toprule
$n$ & Competitive threshold & Code space covered & Survivors & Dihedral orbits \\
\midrule
16 & $\perim(P)>3.1365475$ & $2^{15}$ normalized codes & 16 & 1 \\
32 & $U_{32}-\perim(P)\le1.35\cdot10^{-13}$ & $2^{31}$ normalized codes & 96 & 3 \\
64 & $U_{64}-\perim(P)\le2.84\cdot10^{-23}$ & all $2^{64}$ half-codes & 896 & 6 \\
\bottomrule
\end{tabular}
\end{table}

\section{Difference bodies and sign-code reconstruction}

Let $P$ be a convex small $n$-gon and let
\[
  Z=P-P=P+(-P).
\]
Then $Z$ is centrally symmetric, $Z\subseteq\overline B(0,1)$, and planar perimeter is Minkowski additive, so
\begin{equation}
  \perim(Z)=2\perim(P).
  \label{eq:per-difference}
\end{equation}
The cyclic edge list of $Z$ is obtained by merging the edge lists of $P$ and $-P$, hence $Z$ has at most $2n$ genuine edges.

Suppose $Z$ has exactly $2n$ genuine vertices. Label one half cyclically by
\[
 z_0,z_1,\ldots,z_{n-1},\qquad z_n=-z_0,
\]
and write $e_j=z_{j+1}-z_j$.

\begin{lemma}[Sign-code reconstruction]
For $c=(c_0,\ldots,c_{n-1})\in\{\pm1\}^n$, the condition
\begin{equation}
  \sum_{j=0}^{n-1}c_je_j=0
  \label{eq:edge-closure}
\end{equation}
is necessary and sufficient for the existence of a strictly convex $n$-gon $P_c$, unique up to translation, whose selected edge set is $\{c_je_j\}$ and whose difference body is $Z$.
\end{lemma}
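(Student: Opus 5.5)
The plan is to use the standard description of a planar convex polygon by its cyclically ordered edge vectors. The $2n$ edges of $Z$ are $e_0,\dots,e_{n-1},-e_0,\dots,-e_{n-1}$, since central symmetry forces $z_{j+n}=-z_j$. Because $Z$ has exactly $2n$ genuine vertices, these $2n$ vectors are nonzero and have pairwise distinct directions, strictly increasing in angle through one full turn. In particular, no $e_j$ is parallel to any $e_k$ or $-e_k$ for $k\ne j$.

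\textbf{Necessity.} Suppose $P$ is a strictly convex $n$-gon with $P-P=Z$, whose edge set is $\{c_je_j\}$. Each $c_je_j$ is then an edge of $P$, so its edge vectors sum to zero, which is exactly \eqref{eq:edge-closure}.

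To justify that this is the right bookkeeping, I would use Minkowski additivity of support sets: for each outer normal $u$, the face $F_u(Z)=F_u(P)+F_u(-P)$. An edge of $Z$ with normal $u$ is therefore the sum of the faces of $P$ and $-P$ in direction $u$.
\begin{itemize}
\item Since $P$ has $n$ edges, $-P$ has $n$ edges with the opposite normals.
\item $Z$ has $2n$ edges, so no edge of $P$ is parallel to another edge of $P$ or of $-P$.
\item Hence every edge of $Z$ comes from exactly one edge of $P$ or of $-P$, the other summand being a vertex.
\end{itemize}
So for each $j$ exactly one of $e_j$, $-e_j$ is an edge of $P$, which defines $c_j$.

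\textbf{Sufficiency.} Given $c$ satisfying \eqref{eq:edge-closure}, list the $n$ vectors $f_j=c_je_j$ in increasing angular order. These directions are a selection of one vector from each antipodal pair among the $2n$ distinct directions, so they are pairwise distinct and none are antipodal. Concatenating them in angular order with zero sum gives a closed polygonal path turning monotonically through total angle $2\pi$. Hence it bounds a convex polygon $P_c$, which is strictly convex with $n$ genuine vertices because consecutive directions differ.

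Uniqueness up to translation follows because a convex polygon is determined up to translation by its multiset of edge vectors: angular order is forced.

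Finally, $P_c-P_c$ has edge vectors equal to the edge vectors of $P_c$ together with their negatives, merged in angular order. This gives $\{\pm e_j\}$, the edge list of $Z$. Both $P_c-P_c$ and $Z$ are centrally symmetric about the origin, so they coincide.

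\textbf{Main obstacle.} The delicate point is the genuineness and non-parallelism bookkeeping in the necessity direction. One must rule out that an edge of $P$ and an edge of $-P$ share a direction; otherwise a single $Z$-edge would split between them and the sign $c_j$ would be ill-defined. This is precisely where the hypothesis of exactly $2n$ genuine vertices is spent. I would handle it by the edge count above: $2n$ edges of $Z$ against at most $n+n$ contributions forces every contribution to be a distinct, unmerged edge.
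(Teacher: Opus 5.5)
Your proposal is correct and follows essentially the same route as the paper: necessity by closure of the edge list, and sufficiency by sorting the $c_je_j$ by angle, merging with the antipodal list to recover the edge list of $Z$, and using central symmetry to eliminate the translation. Your Minkowski-face bookkeeping is not needed for the lemma as stated, since there the edge set $\{c_je_j\}$ is part of the hypothesis; it is nonetheless a valid supplement, showing that every $P$ with $P-P=Z$ arises from some code, which the paper uses implicitly.
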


\begin{proof}
Necessity is closure of the edge list. Conversely, the vectors $c_je_j$ are nonzero and have pairwise distinct unoriented directions. Sorting them by polar angle and using \eqref{eq:edge-closure} produces the edge list of a closed strictly convex polygon. The edge list of its negative is the antipodal complement, so merging the two lists gives the full cyclic edge list of $Z$. Convex polygons with the same cyclic edge list differ by translation, and centered difference bodies therefore coincide.
\end{proof}

Summation by parts rewrites \eqref{eq:edge-closure} as
\begin{equation}
 -(c_0+c_{n-1})z_0+\sum_{j=1}^{n-1}(c_{j-1}-c_j)z_j=0.
 \label{eq:vertex-closure}
\end{equation}
All coefficients in \eqref{eq:vertex-closure} belong to $\{-2,0,2\}$. This formulation is crucial because it remains valid under small perturbations of the vertices and therefore links the finite code to the original geometric feasible set.

\begin{lemma}[At most one interior half-vertex]
At a local perimeter maximizer with a genuine $2n$-vertex difference body, at most one of $z_0,\ldots,z_{n-1}$ lies strictly inside the unit disk. If such a vertex exists, its coefficient in \eqref{eq:vertex-closure} is nonzero.
\end{lemma}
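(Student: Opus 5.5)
The plan is to treat a local maximizer as a local maximum of a convex function over a set that, near the given configuration, is cut out only by the linear closure condition \eqref{eq:vertex-closure} and the constraints $|z_j|\le 1$.

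\textbf{Setting up perturbations.} Fix the code $c$ of the maximizer and write $a_j\in\{-2,0,2\}$ for the coefficients in \eqref{eq:vertex-closure}. By \eqref{eq:per-difference},
\[
 \perim(P)=\tfrac12\perim(Z)=F(z):=\sum_{j=0}^{n-1}\abs{z_{j+1}-z_j},\qquad z_n=-z_0,
\]
and $F$ is a convex function of $(z_0,\dots,z_{n-1})$. Suppose $(z_j)$ is perturbed so that three things hold: \eqref{eq:vertex-closure} still holds, every $|z_j|\le1$, and the perturbation is small. Strict convexity of $Z$ and the distinctness of edge directions are open conditions, so they persist. The sign-code reconstruction lemma then yields a strictly convex $n$-gon $P'$ whose difference body lies in $\overline B(0,1)$. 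Hence $\diam(P')\le 1$, and $P'$ is close to $P$ up to translation. So along any such perturbation path $t\mapsto z(t)$ with $z(0)=z$, the map $t\mapsto F(z(t))$ must have a local maximum at $t=0$.

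\textbf{Two-sided perturbations and strict convexity.} I will only use straight-line perturbations $z(t)=z+tw$, feasible for all small $|t|$ on both sides. For these, $g(t)=F(z+tw)$ is convex. I then show $g$ is in fact strictly convex near $0$; a strictly convex function cannot have a local maximum at an interior point, which gives the contradiction. The basic estimate concerns moving a single vertex $z_i$ by $td$. The two edge terms
\[
 t\mapsto \abs{z_i+td-z_{i-1}}+\abs{z_{i+1}-z_i-td}
\]
are strictly convex unless $d$ is parallel to both adjacent edges $e_{i-1}$ and $e_i$. That cannot happen, because these edges have distinct directions at a genuine vertex. At $i=0$ the edge $e_{n-1}=-z_0-z_{n-1}$ also depends on $z_0$, but the same reasoning applies with $d$ replaced by $-d$ in that term.

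\textbf{The two claims.} First, suppose $z_i$ is interior and $a_i=0$. Take $w$ to move $z_i$ alone by $td$; closure is unaffected and $|z_i+td|<1$ for small $|t|$. The function $g$ is strictly convex, which gives the contradiction. So an interior vertex must have a nonzero coefficient. Second, suppose $z_i$ and $z_k$ are both interior, with $a_i,a_k\neq0$ and $a_i/a_k=\pm1$. Move $z_i$ by $td$ and $z_k$ by $-t(a_i/a_k)d$. This preserves $a_iz_i+a_kz_k$, and hence \eqref{eq:vertex-closure}. Now $g$ is the sum of the strictly convex contribution from the edges at $z_i$ and convex contributions from the remaining edges. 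If $i,k$ are adjacent, including the wraparound pair $(0,n-1)$ via $z_n=-z_0$, the shared edge term is still a norm of an affine function of $t$, hence convex. Choosing $d$ generic, not parallel to any edge, keeps the sum strictly convex, and again there is a contradiction. This gives at most one interior half-vertex.

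\textbf{Main obstacle.} The delicate point is the bookkeeping in the adjacent and wraparound cases: one has to confirm that the shared edge term cannot cancel the strict convexity coming from the other edges. The resolution is that convex plus strictly convex is strictly convex, so no cancellation is possible. The other point to check carefully is that the local-maximizer notion in the statement is with respect to small polygons near $P$. The perturbed polygons do qualify: via the reconstruction lemma they are genuine strictly convex $n$-gons of diameter at most one, close to $P$.
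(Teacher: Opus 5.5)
Your proposal is correct and takes essentially the same approach as the paper. It uses straight-line two-sided perturbations that preserve the linear closure \eqref{eq:vertex-closure}: move a zero-coefficient interior vertex alone, or pair two interior vertices via $\delta z_i=d$, $\delta z_k=-(a_i/a_k)d$ (the paper's $a_sh$, $-a_rh$ up to scaling); feasibility comes from the reconstruction lemma, and strict convexity along the line comes from a generic non-edge-parallel direction. Your handling of the adjacent and wrap-around cases, where an unshared incident edge still varies by $\pm td$, is the same observation the paper records in its appendices.
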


\begin{proof}
If two interior vertices $z_r,z_s$ have nonzero coefficients $a_r,a_s$, perturb them by
\[
 z_r(t)=z_r+t a_sh,\qquad z_s(t)=z_s-t a_rh.
\]
If an interior vertex has zero coefficient, move it alone. The closure equality is preserved. For both signs of sufficiently small $t$, disk containment, strict cyclic order, and the genuine-vertex inequalities remain valid. Along this line the perimeter is a sum of norms of affine functions and is therefore convex. Choosing $h$ outside finitely many edge-parallel directions makes at least one summand strictly convex, contradicting a two-sided local maximum.
\end{proof}

\section{Saturation of a global maximizer}

The next step is to prove that every vertex of an extremal difference body lies on the unit circle. This removes radial variables and reduces the problem to angles and a sign code.

First, a certified feasible lower bound is separated from the best possible perimeter of a centrally symmetric polygon with fewer than $2n$ vertices. Since an $m$-gon in the unit disk has perimeter at most $2m\sin(\pi/m)$, this forces the difference body of any global maximizer to have all $2n$ genuine vertices.

For a vertex $z_j$, write $r_j=\norm{z_j}$, let $\omega_j$ be the width of its outward normal cone, let $\eta_j$ be the midpoint direction of that cone, and let $\phi_j$ be the polar direction of $z_j$. Put $\delta_j=\eta_j-\phi_j$. Cauchy's formula gives
\begin{equation}
  \perim(Z)=\sum_{j=0}^{2n-1}2r_j\sin\frac{\omega_j}{2}\cos\delta_j,
  \qquad \sum_{j=0}^{2n-1}\omega_j=2\pi.
  \label{eq:cauchy-cones}
\end{equation}
Jensen's inequality gives the universal upper bound in \eqref{eq:reinhardt-bound}. More importantly, the nonnegative deficit decomposition
\begin{align}
 2U_n-\perim(Z)
 &=\left(2U_n-\sum_j2\sin\frac{\omega_j}{2}\right) \\
 &\quad+\sum_j2\sin\frac{\omega_j}{2}\bigl(1-r_j\cos\delta_j\bigr)
 \label{eq:deficit-decomp}
\end{align}
turns a very small global deficit into pointwise bounds on every $\omega_j$, $r_j$, and $\delta_j$. In particular, consecutive radial directions acquire a uniform positive projective separation.

By the preceding perturbation lemma, there can be at most one interior half-vertex $z_r$. At such a point MFCQ holds: move all active unit-circle vertices radially inward and use the free two-dimensional displacement of $z_r$ to restore \eqref{eq:vertex-closure}. The gradient of $\perim(Z)/2$ with respect to $z_j$ is
\begin{equation}
  g_j=2\sin\frac{\omega_j}{2}e^{i\eta_j}.
  \label{eq:vertex-gradient}
\end{equation}
Stationarity at the interior vertex fixes the projective direction of the equality multiplier to be $\eta_r$. A second nonzero coefficient in \eqref{eq:vertex-closure}, together with tangential stationarity at its active circle vertex, forces its radial direction to lie too close to $\phi_r$, contradicting the projective separation obtained from \eqref{eq:deficit-decomp}. The explicit constants are different in the three cases and are certified in Appendices~\ref{app:n16}--\ref{app:n64}.

\begin{proposition}[Saturation]
For each of $n=16,32,64$, every global maximizer has a difference body with exactly $2n$ genuine vertices, all lying on the unit circle.
\end{proposition}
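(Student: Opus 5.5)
The proof runs in three stages: vertex count, reduction to at most one interior vertex, and exclusion of that vertex by the KKT conditions.

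\textbf{Step 1: the difference body has $2n$ vertices.} Let $P$ be a global maximizer and $Z=P-P$. We first exhibit a certified feasible polygon $P^\ast$ with $\perim(P^\ast)\ge L_n$, for an explicit rational $L_n$. One choice is a rational perturbation of the numerical candidate whose diameter is checked in interval arithmetic; for $n=16$, $L_{16}=3.1365475$ as in Table~\ref{tab:summary}. If $Z$ had $m\le 2n-2$ genuine vertices ($m$ is even by central symmetry), then $Z$ is an $m$-gon in $\overline B(0,1)$. By \eqref{eq:per-difference},
\[
\perim(P)\le \tfrac12\cdot 2m\sin(\pi/m)\le (2n-2)\sin\frac{\pi}{2n-2}.
\]
A rigorous expansion of this bound shows it is $<L_n$. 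So $Z$ has exactly $2n$ vertices, and the vertex-closure identity \eqref{eq:vertex-closure} holds for some closed sign code $c$.

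\textbf{Step 2: reduction to one interior vertex.} A global maximizer is a local maximizer in the $(z_j)$ coordinates, through the reconstruction lemma. By the preceding lemma, at most one half-vertex $z_r$ lies strictly inside the disk, and its coefficient $a_r\in\{\pm2\}$. Assume such $z_r$ exists; we derive a contradiction. Since $\perim(P)\ge L_n$, the deficit $2U_n-\perim(Z)$ is at most $\varepsilon_n:=2(U_n-L_n)$.

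Both terms of \eqref{eq:deficit-decomp} are nonnegative. From the first term, strict concavity of $\sin$ gives a quantitative bound $|\omega_j-\pi/n|\le\alpha_n$. From the second term, $\sin(\omega_j/2)\ge\sin((\pi/n-\alpha_n)/2)$ yields $1-r_j\cos\delta_j\le\beta_n$, hence $|\delta_j|\le\gamma_n$ and $1-r_r\le\beta_n$. Consecutive normal-cone midpoints differ by $(\omega_j+\omega_{j+1})/2$. Therefore the polar angles $\phi_j$ of distinct half-vertices are separated modulo $\pi$ by at least $\pi/n-\alpha_n-2\gamma_n=:\sigma_n>0$.

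\textbf{Step 3: KKT exclusion, the main obstacle.} MFCQ holds as sketched in the text, so KKT multipliers exist: $\lambda\in\mathbb R^2$ for the closure equation and $\mu_j\ge0$ for $|z_j|\le1$. Stationarity reads $g_j=a_j\lambda+2\mu_j z_j$.
\begin{itemize}
\item At the interior vertex, $\mu_r=0$, so $a_r\lambda=g_r$. Hence $\lambda$ is parallel to $e^{i\eta_r}$, with $|\lambda|=\sin(\omega_r/2)$.
\item The code has some other nonzero coefficient $a_s$, $s\ne r$. Otherwise the closure would read $a_rz_r=0$, which is impossible since $z_r\ne0$; closure also forces an even number of sign changes. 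At $z_s$ the tangential component of stationarity gives $\langle g_s,iz_s\rangle=a_s\langle\lambda,iz_s\rangle$. The left side is $2\sin(\omega_s/2)\sin\delta_s$, so it is $O(\gamma_n/n)$. The right side has absolute value $2\sin(\omega_r/2)\,|\sin(\eta_r-\phi_s)|$.
\item This forces $|\sin(\eta_r-\phi_s)|\lesssim\gamma_n$. So $\phi_s$ lies within roughly $2\gamma_n$ of $\phi_r$ modulo $\pi$, contradicting the separation $\sigma_n$ once the constants satisfy $\sigma_n>C\gamma_n$.
\end{itemize}

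The hard part is making these constants quantitatively sufficient. For $n=32,64$ the deficit thresholds $10^{-13}$ and $10^{-23}$ are tiny. For $n=16$, however, $U_{16}-M_{16}$ is not that small. There the crude Jensen bound may give $\alpha_{16}$ too large, and I would sharpen it using a second-order Taylor bound of $\sin$ with explicit remainder. If the margin is still thin, I would also use the radial KKT components, $\mu_s\ge0$, to bound $|\lambda|$. Finally, I would verify the resulting inequalities in interval arithmetic for each $n$, as the appendices do.
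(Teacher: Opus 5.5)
Your proposal is correct and follows the paper's proof essentially step for step. A certified lower bound beating $(2n-2)\sin\frac{\pi}{2n-2}$ forces $2n$ genuine vertices. The two-sided perturbation lemma leaves at most one interior half-vertex $z_r$, with $a_r=\pm2$. The deficit decomposition of Cauchy's formula localizes $\omega_j$, $r_j$, $\delta_j$ and gives projective separation. Finally, KKT (after MFCQ) fixes $\lambda$ to the direction $\eta_r$, so the tangential equation at a second nonzero-coefficient vertex contradicts that separation.

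Minor remarks:
\begin{enumerate}[label=(\roman*)]
\item A rational perturbation of the numerical candidate will typically violate some of its active unit diameters, so it must be rescaled before the diameter check can pass. The paper instead imports Bingane's construction for $n=16$, and for $n=32,64$ uses an exact intermediate-value root in an axial one-parameter family.
\item The crude one-gap Jensen bound already suffices for $n=16$: the paper gets closeness $<0.0142$ against separation $>0.17$, so no sharpening or radial multiplier argument is needed.
\item Your parenthetical claim that closure forces an even number of sign changes is false. Since $\sum_j a_j=-2c_{n-1}$, the number of nonzero $a_j$ is always odd (e.g.\ $11$ for $n=16$). It is harmless, because $z_r\neq0$ by itself produces the second nonzero coefficient.
\end{enumerate}
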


\section{The saturated angle-code problem}

After rotation, write
\[
  0=\phi_0<\phi_1<\cdots<\phi_n=\pi,
  \qquad \alpha_j=\phi_{j+1}-\phi_j.
\]
Then
\begin{equation}
  \perim(P)=F(\alpha):=\sum_{j=0}^{n-1}2\sin\frac{\alpha_j}{2},
  \qquad \sum_{j=0}^{n-1}\alpha_j=\pi,
  \label{eq:angle-objective}
\end{equation}
and the sign-code closure is
\begin{equation}
  G_c(\phi):=\sum_{j=0}^{n-1}c_j\bigl(e^{i\phi_{j+1}}-e^{i\phi_j}\bigr)=0.
  \label{eq:complex-closure}
\end{equation}
Set $\theta=\pi/n$, $\phi_j=j\theta+s_j$, and $x_j=s_{j+1}-s_j$. A one-gap Jensen envelope localizes every competitive $\alpha_j$ around $\theta$. Strong concavity of \eqref{eq:angle-objective} then gives
\begin{equation}
  U_n-F(\alpha)\ge \kappa_n\norm{x}_2^2,
  \label{eq:strong-concavity}
\end{equation}
with a certified $\kappa_n>0$. Thus every competitive configuration lies in a small Euclidean ball around the regular angle vector.

Let $\xi=e^{i\theta}$. Expanding \eqref{eq:complex-closure} at the regular point gives
\begin{equation}
  G_c=b_c+L_c(s)+R_c(s),
  \qquad
  b_c=(\xi-1)\sum_{j=0}^{n-1}c_j\xi^j.
  \label{eq:residual-expansion}
\end{equation}
The nonlinear remainder is bounded by a Dirichlet Poincare inequality. The following uniform estimate controls the linear term.

\begin{lemma}[Twisted Dirichlet bound]
For every $c\in\{\pm1\}^n$,
\begin{equation}
  \abs{L_c(s)}\le \sqrt{2n}\cos\frac{\pi}{2n}\,\norm{x}_2.
  \label{eq:twisted-bound}
\end{equation}
\end{lemma}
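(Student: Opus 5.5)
The plan is to compute $L_c$ explicitly, factor out a unimodular phase edge by edge, apply Cauchy--Schwarz over the $n$ edges, and then control the remaining quadratic form sharply with the Dirichlet spectrum of the path. The constant $\cos\frac{\pi}{2n}$ should come out exactly from the top eigenvalue of the path adjacency matrix.

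\emph{Step 1: the linear term.} By the normalization $\phi_0=0$, $\phi_n=\pi$ we have $s_0=s_n=0$. Differentiating $e^{i(j\theta+s_j)}$ at $s=0$ gives
\[
  L_c(s)=i\sum_{j=0}^{n-1}c_j\bigl(\xi^{j+1}s_{j+1}-\xi^{j}s_j\bigr).
\]
Rather than splitting into an $x$-part and an $s$-part (which loses a constant and gives only $\sqrt n(1+\cos\frac\theta2)$), I will symmetrize each summand:
\[
  \xi^{j+1}s_{j+1}-\xi^js_j=\xi^{j+1/2}w_j,\qquad w_j:=\xi^{1/2}s_{j+1}-\xi^{-1/2}s_j .
\]
Since $\abs{c_j\xi^{j+1/2}}=1$, Cauchy--Schwarz gives $\abs{L_c(s)}\le\sqrt n\,\norm{w}_2$ uniformly in $c$. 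It therefore suffices to prove $\norm{w}_2^2\le 2\cos^2\frac{\theta}{2}\norm{x}_2^2=(1+\cos\theta)\norm{x}_2^2$.

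\emph{Step 2: reduction to a quadratic-form inequality.} Put $S=\sum_{j=1}^{n-1}s_j^2$ and $C=\sum_{j=0}^{n-1}s_js_{j+1}$, using $s_0=s_n=0$. Expanding $\abs{w_j}^2=s_{j+1}^2+s_j^2-2\cos\theta\, s_js_{j+1}$ and summing gives $\norm{w}^2=2S-2\cos\theta\, C$. Similarly $\norm{x}^2=\sum(s_{j+1}-s_j)^2=2S-2C$. The desired inequality becomes
\[
  (1+\cos\theta)(2S-2C)-(2S-2\cos\theta\,C)=2\bigl(\cos\theta\,S-C\bigr)\ge0,
\]
that is, $C\le\cos\frac{\pi}{n}\,S$.

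\emph{Step 3: the Dirichlet spectral bound.} The form $2C=s^{\mathsf T}As$ is the adjacency form of the path on the $n-1$ interior vertices $s_1,\dots,s_{n-1}$. Its eigenvalues are $2\cos\frac{k\pi}{n}$ for $k=1,\dots,n-1$, with eigenvectors $\sin\frac{jk\pi}{n}$. Hence $C\le\cos\frac{\pi}{n}\,S$, which closes the argument. Equivalently, this is the Dirichlet Poincar\'e inequality $\norm{x}^2\ge4\sin^2\frac{\pi}{2n}\,S$ quoted before the lemma, rewritten in terms of $C$.

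\emph{Main obstacle.} The only delicate point is Step~1's choice of the half-step phase $\xi^{j+1/2}$. The cruder decomposition $\xi^{j+1}x_j+(\xi^{j+1}-\xi^j)s_j$ followed by the triangle inequality is not sharp enough to give $\sqrt{2n}\cos\frac{\pi}{2n}$. With the symmetric splitting, Steps~2--3 are exact identities plus a standard eigenvalue fact, so no case analysis on $c$ is needed and the bound is uniform over all codes.
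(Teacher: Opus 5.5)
Your proof is correct and is essentially the paper's argument. The paper applies Cauchy--Schwarz to get $\abs{L_c(s)}^2\le n\sum_j\abs{\xi s_{j+1}-s_j}^2$, which is the same quantity as your $\norm{w}_2^2$, so the half-step phase is only cosmetic; it then compares the twisted form $2S-2\cos\theta\,C$ with $\norm{x}_2^2=2S-2C$ in the common sine eigenbasis, and its mode-by-mode inequality $\mu_k\le 2\cos^2(\theta/2)\lambda_k$ (half-difference $\cos\theta-\cos(k\theta)\ge0$) is exactly your reduction to $C\le\cos\theta\,S$.
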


\begin{proof}
Cauchy--Schwarz gives
\[
 \abs{L_c(s)}^2\le n\sum_{j=0}^{n-1}\abs{\xi s_{j+1}-s_j}^2.
\]
On $s_1,\ldots,s_{n-1}$, the twisted Dirichlet form has eigenvalues
\[
 \mu_k=2-2\cos\theta\cos(k\theta),
\]
whereas $\sum_jx_j^2$ has eigenvalues $\lambda_k=2-2\cos(k\theta)$. For $k\ge1$,
\[
 \mu_k\le2\cos^2(\theta/2)\lambda_k,
\]
because half of the difference between the right and left sides is $\cos\theta-\cos(k\theta)\ge0$.
\end{proof}

Equations \eqref{eq:strong-concavity}--\eqref{eq:twisted-bound} imply a necessary upper bound on the regular residual $\abs{b_c}$. This converts a continuous global problem into a finite exact code screen.

\section{Exact finite certificates}

\subsection{$n=16$: complete direct scan}

After normalizing $c_0=+1$, the verifier scans all $2^{15}=32768$ codes. Exact rational and outward integer intervals bound $\abs{b_c}$ and the code-specific linear operator norm
\[
  \sigma_c^2=\lambda_{\max}(B_cD^{-1}B_c^T),
\]
where $D$ is the Dirichlet path matrix. Exactly 16 codes survive the necessary inequality. They form one dihedral orbit, represented by $(+--+-++-)^2$. The smallest exclusion margin among all other codes is greater than $0.002111068044745996$.

\subsection{$n=32$: binary meet-in-the-middle}

There are $2^{31}$ normalized codes. The verifier encloses each regular edge $e_j=\xi^{j+1}-\xi^j$ by rational intervals, rounds it to a fixed-point integer vector at scale $10^{18}$ with a proved coordinate error, and splits the sign sum into blocks of sizes 16 and 16. An exact range search returns 96 codes. They form three dihedral orbits, represented by
\begin{center}
\ttfamily\small
A = ++++--+--+-+-+-+++---+-+-+-++-++\\
B = +++-+-+-++--+--+--++-+-+-+++--++\\
C = ++-++-+-+-+-+--++--+-+-+-+-++-++.
\end{center}
Code-specific exact bounds on $\abs{b_c}$ and $\sigma_c$ show that the best possible deficits in orbits $A$ and $C$ exceed the certified feasible deficit. Only orbit $B$ remains.

\subsection{$n=64$: reflection-pair ternary decomposition}

A direct normalized scan would involve $2^{63}$ codes. Pair the indices $j$ and $63-j$ and set
\[
 \beta_j=\left(j-\frac{63}{2}\right)\theta,
 \qquad
 p_j=\frac{c_j+c_{63-j}}2,
 \qquad
 q_j=\frac{c_j-c_{63-j}}2.
\]
Exactly one of $p_j,q_j$ is zero, and the other is $\pm1$. Direct calculation gives
\begin{equation}
 \sum_{j=0}^{63}c_j\xi^j
 =2e^{63i\theta/2}
 \left(\sum_{j=0}^{31}p_j\cos\beta_j
 +i\sum_{j=0}^{31}q_j\sin\beta_j\right).
 \label{eq:reflection-decomp}
\end{equation}
Conversely, complementary ternary supports in the two coordinates determine one and only one binary code. Thus \eqref{eq:reflection-decomp} is a bijective reparameterization of all $4^{32}=2^{64}$ half-codes, not a symmetry assumption.

Each ternary coordinate sum is split into two 16-term blocks, producing lists of $3^{16}=43{,}046{,}721$ states. Fixed-point interval guarantees ensure that every code satisfying the analytic residual threshold is included. The exact scan leaves 896 codes in six dihedral orbits of sizes
\[
 128,128,128,128,128,256.
\]
Code-specific lower residual bounds and upper operator-norm bounds eliminate five orbits; the smallest certified deficit excess is $8.50\cdot10^{-24}$.

\section{Uniqueness inside the winning code}

Fix a winning code and use the internal angles $u=(\phi_1,\ldots,\phi_{n-1})$. Let $f=-F$ and let $g=(\Re G_c,\Im G_c)$. In the certified high-perimeter region, the Hessian of $f$ is a weighted Dirichlet path matrix with positive weights, hence
\begin{equation}
  \nabla^2 f\succeq mI
  \label{eq:strong-convexity-fixed}
\end{equation}
for an explicit $m>0$.

The nonzero columns of $Dg$ occur at the switch indices $J=\{j:c_{j-1}\ne c_j\}$ and are $a_jie^{i\phi_j}$ with $a_j=\pm2$. Therefore
\begin{equation}
  \sigma_{\min}(Dg)^2
  =2\left(\abs{J}-\left|\sum_{j\in J}e^{2i\phi_j}\right|\right).
  \label{eq:jacobian-singular}
\end{equation}
An exact root-of-unity bound at the regular point, plus the near-regular radius, yields a uniform lower bound on $\sigma_{\min}(Dg)$. At a KKT point $\nabla f+Dg^Ty=0$, this gives a small multiplier bound $\norm{y}\le Y$.

\begin{proposition}[Quantitative KKT uniqueness]
If the certified constants satisfy $m>2Y$, then the fixed code has at most one feasible KKT point in the high-perimeter region after fixing rotation.
\end{proposition}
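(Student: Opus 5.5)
The plan is to compare two KKT points through the Lagrangian and derive a contradiction from strong convexity. Suppose $u^1,u^2$ are feasible KKT points of the fixed-code problem in the high-perimeter region, with multipliers $y^1,y^2$ satisfying $\norm{y^k}\le Y$. Rotation is fixed by the normalization $\phi_0=0$, $\phi_n=\pi$ built into the variables $u$, and we assume $u^1\ne u^2$. The high-perimeter region is a small Euclidean ball around the regular angle vector, so it is convex and the segment $u(t)=u^1+t(u^2-u^1)$ stays inside it. Along this segment \eqref{eq:strong-convexity-fixed} holds, and so does the lower bound on $\sigma_{\min}(Dg)$.

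Step one is to bound the curvature of the constraint. Each component of $g$ is a signed sum over the switch indices $J$ of terms $a_j\cos\phi_j$ or $a_j\sin\phi_j$ with $\abs{a_j}=2$. The Hessian of $\langle y,g\rangle$ is therefore diagonal, with entries $-a_j\Re\bigl(\bar y\, e^{i\phi_j}\bigr)$, where $y$ is identified with a complex number. Hence
\[
\nabla^2\langle y,g\rangle\succeq -2\norm{y}I,
\]
which is exactly where the constant $2Y$ comes from. Consequently, for $y=y^1$ the Lagrangian $\mathcal L_1(u)=f(u)+\langle y^1,g(u)\rangle$ satisfies $\nabla^2\mathcal L_1\succeq (m-2Y)I\succ0$ on the region.

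Step two is the comparison. Since $u^1$ is a stationary point of the strongly convex function $\mathcal L_1$,
\[
\mathcal L_1(u^2)\ge \mathcal L_1(u^1)+\tfrac{m-2Y}{2}\norm{u^2-u^1}^2 .
\]
Both points are feasible, so $g(u^1)=g(u^2)=0$ and this becomes
\[
f(u^2)\ge f(u^1)+\tfrac{m-2Y}{2}\norm{u^2-u^1}^2 .
\]
The same argument with the roles swapped, using $\mathcal L_2$ built from $y^2$, gives
\[
f(u^1)\ge f(u^2)+\tfrac{m-2Y}{2}\norm{u^2-u^1}^2 .
\]
Adding the two inequalities yields $(m-2Y)\norm{u^2-u^1}^2\le0$, so $u^1=u^2$. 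Note that the multipliers need not agree; only the common bound $Y$ is used.

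The main obstacle is making the region hypotheses precise. The multiplier bound $\norm{y}\le Y$ comes from $y=-(Dg\,Dg^T)^{-1}Dg\,\nabla f$, which gives $\norm{y}\le\norm{\nabla f}/\sigma_{\min}(Dg)$. Bounding this requires $\nabla f$ to be small near the regular point, which holds because its entries are differences $\cos(\alpha_{j}/2)-\cos(\alpha_{j-1}/2)$. It also requires the lower bound on $\sigma_{\min}(Dg)$ from \eqref{eq:jacobian-singular} to hold uniformly over the ball. These facts, together with the convexity of the ball, must be checked for every point of the segment, not just at the KKT points, so the certified radius must be chosen with that in mind.
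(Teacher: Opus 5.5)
Your proof is correct. It reaches exactly the paper's threshold, but through a different mechanism.

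The paper subtracts the two stationarity equations and pairs the difference with $d=u-v$. It then uses strong monotonicity $d^T(\nabla f(u)-\nabla f(v))\ge m\norm{d}^2$. The constraint terms are controlled by the linearization errors $\norm{Dg(u)d},\norm{Dg(v)d}\le\norm{d}^2$, which follow from feasibility, $\abs{a_j}\le2$ and $\abs{e^{it}-1-it}\le t^2/2$. It never forms a Lagrangian and never compares objective values.

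You put the same curvature information into $\nabla^2\langle y,g\rangle\succeq-2\norm{y}I$; your diagonal computation $-a_j\Re(\bar y e^{i\phi_j})$ is right. You then make each Lagrangian $f+\langle y^k,g\rangle$ strongly convex on a convex region and compare function values. Adding your two inequalities gives $(m-\norm{y^1}-\norm{y^2})\norm{d}^2\le0$. This is precisely the paper's $m\norm{d}^2\le(\norm{y_u}+\norm{y_v})\norm{d}^2$.

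Your route gives a second-order sufficiency statement as a by-product. A single KKT point $u^1$ with $m>2\norm{y^1}$ already satisfies $f(u)\ge f(u^1)+\tfrac{m-2\norm{y^1}}{2}\norm{u-u^1}^2$ for every feasible $u$ in the region. So it is the unique perimeter maximizer there before any second KKT point is considered. The paper's route is a symmetric monotonicity argument at the level of gradients, with an explicit remainder for $g$.

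One overstatement in your last paragraph: the lower bound on $\sigma_{\min}(Dg)$, and hence $\norm{y^k}\le Y$, is needed only at the two KKT points, not along the segment. Along the segment you only need \eqref{eq:strong-convexity-fixed}. Any convex set on which the gap bounds hold will do, for example the paper's gap box, or the $\norm{x}_2$-ball, which is an ellipsoid in $u$ lying inside that box.
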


\begin{proof}
Let $(u,y_u)$ and $(v,y_v)$ be two feasible KKT pairs and put $d=u-v$. Strong convexity gives
\[
 d^T(\nabla f(u)-\nabla f(v))\ge m\norm{d}^2.
\]
The exponential Taylor remainder and $\abs{a_j}\le2$ give
\[
 \norm{g(u)-g(v)-Dg(v)d}\le\norm{d}^2.
\]
The analogous reverse estimate also holds. Subtracting the stationarity equations, taking the inner product with $d$, and using feasibility yields
\[
 m\norm{d}^2\le(\norm{y_u}+\norm{y_v})\norm{d}^2\le2Y\norm{d}^2.
\]
Thus $d=0$ when $m>2Y$.
\end{proof}

The certified margins are summarized in Table~\ref{tab:kkt}.

\begin{table}[ht]
\centering
\caption{Certified uniqueness margins in the winning fixed-code problem.}
\label{tab:kkt}
\begin{tabular}{@{}ccc@{}}
\toprule
$n$ & switch indices & certified inequality \\
\midrule
16 & 11 & $m>0.0018>0.000302>2Y$ \\
32 & 21 & $m-2Y>8.7799\cdot10^{-5}$ \\
64 & 27 & $m-2Y>2.22\cdot10^{-5}$ \\
\bottomrule
\end{tabular}
\end{table}

Compactness supplies a global maximizer. Saturation puts it in the angle-code model, the finite certificate puts it in the winning orbit, and strict angle inequalities plus the Jacobian bound make it a KKT point. Quantitative uniqueness then leaves exactly one normalized optimizer; dihedral changes, rotation, reflection, and translation produce the same congruence class.

\section{Verification artifacts and reproducibility}

The evolving source is maintained in a
\href{https://github.com/aster2024/reinhardt-powers-of-two-proof-candidates}{public GitHub repository}.
Archived releases are collected in the
\href{https://doi.org/10.5281/zenodo.21796494}{Zenodo concept record};
the snapshot available when this revision was prepared is the
\href{https://doi.org/10.5281/zenodo.21796495}{version-specific Zenodo record}.
The version DOI identifies the immutable certificate snapshot used here, while the arXiv ancillary package carries the same hashed verification artifacts.

The arXiv source package is designed to include the exact programs as ancillary files under \texttt{anc/}. The principal artifacts are:
\begin{itemize}
  \item \texttt{anc/n16/reinhardt\_n16\_verified\_audit.py};
  \item \texttt{anc/n32/reinhardt\_n32\_verifier.py};
  \item \texttt{anc/n64/n64\_analytic\_verifier.py};
  \item \texttt{anc/n64/n64\_code\_fixed.cpp};
  \item \texttt{anc/n64/n64\_post\_verifier.py};
  \item recorded outputs and the $n=64$ survivor file;
  \item separately organized scans and high-precision KKT cross-checks under \texttt{anc/crosschecks/}.
\end{itemize}

The $n=16$ and $n=32$ core verifiers use the Python standard library and exact integer or rational arithmetic. The $n=64$ analytic verifier now also certifies all 64 fixed-point weights embedded in the C++ scan and the endpoint Jensen inequalities used for localization. The post-verifier uses a 256-unit residual padding, which dominates the generic two-coordinate rounding error $128\sqrt2$. The $n=64$ exhaustive scan is C++17 and is memory intensive; the recorded run used approximately 1.47 GB. No binary floating-point comparison is used to accept or exclude a code in the proof certificates. High-precision floating calculations are used only to identify the final stationary points and to cross-check the reported decimal values.

A minimal reproduction sequence from the source-package root is:
\begin{verbatim}
bash anc/run_core_verifiers.sh
\end{verbatim}
The runner changes into each case directory before execution and regenerates the $n=64$ survivor file before invoking the post-verifier. The ancillary README also gives the directory-local commands.

\section{Computational assistance}

OpenAI Codex was used as an assistant for mathematical exploration, proof checking, and manuscript preparation.

\section{Conclusion}

The three cases share a coherent proof mechanism: a geometric saturation argument reduces the original problem to a near-regular sign-coded angle problem; exact finite certificates isolate a single code orbit; and quantitative nonlinear optimization proves uniqueness inside that orbit. The most technically novel computational ingredient is the $n=64$ reflection-pair ternary decomposition, which covers all $2^{64}$ half-codes without assuming axial symmetry. Together, the analytic reductions and exact certificates resolve the first three open power-of-two cases of Reinhardt's perimeter problem.

\appendix

\section{Detailed certificate narrative for $n=16$}
\label{app:n16}
\input{appendix_n16}

\section{Detailed certificate narrative for $n=32$}
\label{app:n32}
\input{appendix_n32}

\section{Detailed certificate narrative for $n=64$}
\label{app:n64}
\input{appendix_n64}

\end{document}

%% file: appendix_n16.tex
\subsection{Statement}

A \textbf{small polygon} is a planar convex polygon of diameter at most one. Write \(p(P)\) for its perimeter.

\subsubsection{Main theorem}

Among all convex small hexadecagons, the maximum perimeter is attained by a unique congruence class (allowing translation, rotation, reflection, and cyclic relabeling). In the saturated difference-body representation its sign code is the dihedral class of

\[
 c_*=(+--+-++-)^2.
\]

Its perimeter is numerically

\[
 3.136547716486607386085967\ldots .
\]

The decimal value is included only to identify the optimizer; the global proof below does not depend on a decimal interval for the final stationary point.

\begin{center}\rule{0.5\linewidth}{0.5pt}\end{center}

\subsection{Difference bodies and the first strict reduction}

Let \(P\) be a convex small hexadecagon and put

\[
 Z=P-P=P+(-P).
\]

Then \(Z\) is centrally symmetric, \(Z\subseteq \overline B(0,1)\), and Minkowski additivity of planar perimeter gives

\[
 p(Z)=2p(P). \tag{2.1}
\]

The edge directions of a Minkowski sum are obtained by merging the edge-direction lists of its summands. Consequently, \(Z\) has at most \(32\) genuine edges.

We use the following certified lower bound, supplied by Bingane's published explicit feasible construction \(C_{16}\) \cite{bingane2023}:

\[
 p(C_{16})>L_0:=3.1365475. \tag{2.2}
\]

Feasibility of \(C_{16}\) is imported from the cited construction; the exact verifier independently encloses its published radical/trigonometric perimeter formula by nested radical intervals and proves (2.2). Thus this step does not claim an independent coordinate-level feasibility certificate.

\subsubsection{Lemma 2.1 --- the extremal difference body has 32 strict vertices}

Let \(P\) maximize perimeter. Then \(Z=P-P\) has exactly \(32\) genuine vertices and edges.

\paragraph{Proof}

If a centrally symmetric polygon has fewer than \(32\) vertices, it has at most \(30\). An \(m\)-gon contained in the unit disk has perimeter at most

\[
 2m\sin\frac{\pi}{m},
\]

with equality only for the regular inscribed \(m\)-gon. Since \(m\sin(\pi/m)\) increases with \(m\), (2.1) yields

\[
 p(P)\le 30\sin\frac{\pi}{30}<3.1365475,
\]

where the last inequality is exactly certified. This contradicts (2.2). \hfill\(\square\)

In particular, an extremal \(P\) has sixteen nonzero edges with pairwise distinct unoriented directions; otherwise the merged edge list of \(P+(-P)\) would have fewer than \(32\) members.

\begin{center}\rule{0.5\linewidth}{0.5pt}\end{center}

\subsection{Exact reconstruction from a sign code}

Let \(Z\) be a strictly convex centrally symmetric \(32\)-gon. Label one half of its vertices in counterclockwise order by

\[
 z_0,z_1,\ldots,z_{15},\qquad z_{16}=-z_0,
\]

and define its half-edge vectors

\[
 e_j=z_{j+1}-z_j,\qquad 0\le j\le15.
\]

The other sixteen edge vectors are \(-e_0,\ldots,-e_{15}\).

\subsubsection{Lemma 3.1 --- reconstruction lemma}

For \(c=(c_0,\ldots,c_{15})\in\{\pm1\}^{16}\), the condition

\[
 \sum_{j=0}^{15}c_je_j=0 \tag{3.1}
\]

is necessary and sufficient for the existence of a strictly convex hexadecagon \(P_c\) whose difference body is \(Z\) and whose edge set is

\[
 \{c_0e_0,\ldots,c_{15}e_{15}\}.
\]

The polygon \(P_c\) is unique up to translation.

\paragraph{Proof}

Necessity follows from closure of the oriented edge list of \(P_c\).

Conversely, put \(f_j=c_je_j\). Because \(Z\) has \(32\) strict edges, the \(f_j\) are nonzero and have pairwise distinct directions. Sort them by polar angle. Equation (3.1) says that the sorted list closes. A cyclic list of nonzero vectors with strictly increasing directions and zero sum is the edge list of a strictly convex polygon, unique up to translation.

The edge list of \(-P_c\) is \(-f_0,\ldots,-f_{15}\). Hence the edge-direction merge for \(P_c+(-P_c)\) is exactly

\[
 \{f_j,-f_j:0\le j\le15\}
 =\{e_j,-e_j:0\le j\le15\},
\]

which is the full cyclic edge list of \(Z\). Two convex polygons with the same cyclic edge list differ by a translation. Both difference bodies are centrally symmetric about the origin, so the translation is zero and \(P_c-P_c=Z\). \hfill\(\square\)

Expanding (3.1) by summation by parts gives the vertex form

\[
 \sum_{j=0}^{15}a_jz_j=0, \tag{3.2}
\]

where

\[
 a_0=-(c_0+c_{15}),\qquad
 a_j=c_{j-1}-c_j\quad(1\le j\le15). \tag{3.3}
\]

Thus every \(a_j\in\{-2,0,2\}\).

Lemma 3.1 is also the needed local-feasibility statement: any sufficiently small perturbation of the \(z_j\) that preserves central symmetry, strict cyclic order, disk containment, and (3.2) reconstructs a feasible small hexadecagon with the same code.

\begin{center}\rule{0.5\linewidth}{0.5pt}\end{center}

\subsection{At most one interior vertex in a half difference body}

For the half-edge representation, (2.1) gives

\[
 p(P)=\sum_{j=0}^{15}\lVert z_{j+1}-z_j\rVert. \tag{4.1}
\]

\subsubsection{Lemma 4.1 --- two interior half-vertices are impossible}

At a local maximizer satisfying Lemma 2.1, at most one of \(z_0,\ldots,z_{15}\) lies strictly inside the unit disk. If \(z_r\) is the unique interior half-vertex, then \(a_r\ne0\).

\paragraph{Proof}

Suppose first that \(z_r,z_s\) are two distinct interior half-vertices.

If \(a_r=0\), set

\[
 z_r(t)=z_r+th
\]

and leave the other half-vertices fixed. If both coefficients are nonzero, set

\[
 z_r(t)=z_r+t a_sh,\qquad
 z_s(t)=z_s-t a_rh. \tag{4.2}
\]

If exactly one coefficient vanishes, vary the corresponding zero-coefficient vertex alone. In every case (3.2) is preserved.

Because the moved vertices are strictly inside the disk, both signs of sufficiently small \(t\) preserve disk containment. Strict convexity, nonzero edges, and cyclic vertex order are open conditions, so they too persist for \(|t|\) small. Lemma 3.1 therefore reconstructs a feasible polygon along the whole small two-sided interval.

Along this line, (4.1) is a sum of functions of the form

\[
 t\longmapsto\lVert u+tv\rVert,
\]

hence is convex. At least one edge changes. Choose \(h\) outside the finite set of directions parallel to the affected edges. For that edge, \(u\) and \(v\) are not parallel, so its norm is strictly convex. Thus the whole perimeter is strictly convex in \(t\). Strict convexity gives

\[
 p(P_0)<\max\{p(P_{-t}),p(P_t)\}
\]

for every sufficiently small \(t>0\), contradicting local maximality.

If a unique interior vertex \(z_r\) had \(a_r=0\), the same one-vertex perturbation would give the same contradiction. \hfill\(\square\)

This proof includes the adjacent-vertex and endpoint cases: even if the variation of their common edge cancels, at least one outer incident edge has nonzero variation.

\begin{center}\rule{0.5\linewidth}{0.5pt}\end{center}

\subsection{Quantitative near-regularity of every competitor}

Let the full cyclic vertex list of \(Z\) be indexed modulo \(32\). For vertex \(z_j\), write

\[
 r_j=\lVert z_j\rVert,
\]

let \(\omega_j\in(0,\pi)\) be the width of its outward normal cone, let \(\eta_j\) be the midpoint direction of that cone, let \(\phi_j\) be the polar direction of \(z_j\), and choose

\[
 \delta_j=\eta_j-\phi_j\in[-\pi,\pi].
\]

Cauchy's perimeter formula, integrated separately over the normal cones, gives the exact identity

\[
 p(Z)=\sum_{j=0}^{31}2r_j\sin\frac{\omega_j}{2}\cos\delta_j,
 \qquad
 \sum_{j=0}^{31}\omega_j=2\pi. \tag{5.1}
\]

The universal inscribed-polygon bound from (5.1) and Jensen is

\[
 p(Z)\le U_Z:=64\sin\frac{\pi}{32}. \tag{5.2}
\]

For an optimizer, (2.2) and (2.1) imply

\[
 0\le U_Z-p(Z)<1.982\times10^{-6}. \tag{5.3}
\]

All numerical inequalities in the next lemma are exact verifier assertions.

\subsubsection{Lemma 5.1 --- normal-cone, radial, and angular localization}

Every vertex of an extremal \(Z\) satisfies

\[
 0.18<\omega_j<0.21, \tag{5.4}
\]

\[
 r_j>0.99998, \tag{5.5}
\]

\[
 |\delta_j|<0.005. \tag{5.6}
\]

Moreover, the projective angular distance between any two distinct half-vertices is greater than \(0.17\).

\paragraph{Proof}

Fix \(\omega_j=t\). Dropping the factors \(r_k\cos\delta_k\le1\) and applying Jensen to the other 31 cone widths gives

\[
 p(Z)\le F_\omega(t)
 :=2\sin\frac t2+62\sin\frac{2\pi-t}{62}. \tag{5.7}
\]

The function increases up to \(2\pi/32\) and decreases afterwards. The verifier proves

\[
 F_\omega(0.18)<2L_0,
 \qquad
 F_\omega(0.21)<2L_0.
\]

Since \(p(Z)>2L_0\), (5.4) follows.

Rewrite the deficit as

\[
\begin{aligned}
 U_Z-p(Z)
 &=\left(U_Z-\sum_j2\sin\frac{\omega_j}{2}\right)\\
 &\quad+\sum_j2\sin\frac{\omega_j}{2}
       \left(1-r_j\cos\delta_j\right).
\end{aligned} \tag{5.8}
\]

Every term on the right is nonnegative. Hence

\[
 U_Z-p(Z)
 \ge2(1-r_j)\sin\frac{\omega_j}{2}.
\]

Equations (5.3), (5.4), and the certified inequality

\[
 2(1-0.99998)\sin(0.09)>1.982\times10^{-6}
\]

give (5.5).

Likewise, from

\[
 1-r_j\cos\delta_j
 =(1-r_j)+r_j(1-\cos\delta_j)
 \ge r_j(1-\cos\delta_j),
\]

we obtain

\[
 U_Z-p(Z)
 \ge2r_j\sin\frac{\omega_j}{2}(1-\cos\delta_j).
\]

The verifier checks that the right side at \(r_j=0.99998\), \(\omega_j=0.18\), and \(|\delta_j|=0.005\) exceeds the deficit bound (5.3). This proves (5.6).

If the consecutive normal-cone midpoint directions are lifted cyclically, then

\[
 \eta_{j+1}-\eta_j
 =\frac{\omega_j+\omega_{j+1}}2>0.18. \tag{5.9}
\]

Using (5.6), every consecutive radial angle gap is therefore greater than

\[
 0.18-2(0.005)=0.17. \tag{5.10}
\]

For two distinct half-vertices, their projective distance is a sum of one or more consecutive full-circle gaps, or the complementary sum to \(\pi\); either sum contains at least one gap from (5.10). \hfill\(\square\)

\begin{center}\rule{0.5\linewidth}{0.5pt}\end{center}

\subsection{Exclusion of the final unsaturated radius}

\subsubsection{Lemma 6.1 --- every difference-body vertex is on the unit circle}

For an extremal hexadecagon,

\[
 \lVert z_j\rVert=1\qquad(0\le j\le31). \tag{6.1}
\]

\paragraph{Proof}

By Lemma 4.1, it is enough to exclude a unique interior half-vertex \(z_r\). In this case \(a_r\ne0\), hence \(|a_r|=2\).

Consider the finite-dimensional program in variables \(z_0,\ldots,z_{15}\): maximize (4.1), subject to the two equality components of (3.2) and the disk inequalities

\[
 q_j(z)=\lVert z_j\rVert^2-1\le0.
\]

All strict convexity and ordering requirements hold on an open neighborhood and introduce no active constraints.

We first verify MFCQ. For every active vertex \(j\ne r\), choose \(d_j=-z_j\). Then

\[
 Dq_j(z)d_j=-2<0.
\]

Because \(a_r\ne0\), choose \(d_r\) uniquely so that

\[
 \sum_ja_jd_j=0.
\]

The equality Jacobian has rank two, again because its \(z_r\)-block is \(a_rI_2\). Thus MFCQ holds and the KKT equations are necessary.

Let \(g_j\) denote the gradient of (4.1) with respect to \(z_j\). The difference of the two unit incident edge tangents is the outward angle-bisector vector, so

\[
 g_j=2\sin\frac{\omega_j}{2}\,e^{i\eta_j}. \tag{6.2}
\]

At the interior vertex the disk multiplier vanishes. Therefore, for some equality multiplier \(\lambda\in\mathbb R^2\),

\[
 g_r=a_r\lambda. \tag{6.3}
\]

Consequently the projective direction of \(\lambda\) is \(\eta_r\), and

\[
 \lVert\lambda\rVert=\sin\frac{\omega_r}{2}. \tag{6.4}
\]

There is another index \(j\ne r\) with \(a_j\ne0\). Otherwise (3.2) would imply \(a_rz_r=0\), forcing \(z_r=0\); but the origin lies in the interior of the two-dimensional centrally symmetric polygon \(Z\), so it is not a vertex.

At this second index \(z_j\) lies on the unit circle. Project its KKT equation onto the tangent direction \(ie^{i\phi_j}\); the radial disk multiplier disappears. Using \(|a_j|=2\), (6.2), and (6.4), we get

\[
 \left|\sin(\eta_r-\phi_j)\right|
 =\frac{\sin(\omega_j/2)}{\sin(\omega_r/2)}
   |\sin\delta_j|. \tag{6.5}
\]

By Lemma 5.1 and exact interval evaluation,

\[
 \frac{\sin(0.105)}{\sin(0.09)}\sin(0.005)<0.00585. \tag{6.6}
\]

If \(d_{\mathbb{RP}^1}\) denotes projective angular distance, then for \(0\le d\le\pi/2\), \(\sin d\ge2d/\pi\). Thus (6.5)--(6.6) imply

\[
 d_{\mathbb{RP}^1}(\eta_r,\phi_j)
 <\frac\pi2(0.00585).
\]

Together with \(|\eta_r-\phi_r|<0.005\), the verifier certifies

\[
 d_{\mathbb{RP}^1}(\phi_r,\phi_j)<0.0142<0.17,
\]

contradicting Lemma 5.1. Hence no interior vertex exists. \hfill\(\square\)

This proves, for \(n=16\), the full-radius saturation property that the prior zonogon computation had assumed conjecturally.

\begin{center}\rule{0.5\linewidth}{0.5pt}\end{center}

\subsection{The saturated angle model}

Rotate so that

\[
 z_0=1,\qquad z_{16}=-1,
\]

and write

\[
 z_j=e^{i\phi_j},\qquad
 0=\phi_0<\phi_1<\cdots<\phi_{16}=\pi.
\]

Set

\[
 \alpha_j=\phi_{j+1}-\phi_j,
 \qquad
 \sum_{j=0}^{15}\alpha_j=\pi. \tag{7.1}
\]

Then

\[
 p(P)=\sum_{j=0}^{15}2\sin\frac{\alpha_j}{2}, \tag{7.2}
\]

and the code closure is

\[
 G_c(\phi):=
 \sum_{j=0}^{15}c_j
 \left(e^{i\phi_{j+1}}-e^{i\phi_j}\right)=0. \tag{7.3}
\]

Let

\[
 \alpha_0^*=\frac\pi{16},\qquad
 x_j=\alpha_j-\alpha_0^*,
 \qquad \sum_jx_j=0. \tag{7.4}
\]

\subsubsection{Lemma 7.1 --- every global competitor is in a tiny regular neighborhood}

Every saturated feasible configuration with perimeter greater than \(L_0\) satisfies

\[
 0.189<\alpha_j<0.204 \tag{7.5}
\]

and

\[
 \lVert x\rVert_2<0.0065. \tag{7.6}
\]

\paragraph{Proof}

If one gap is fixed at \(t\), Jensen on the remaining fifteen gaps gives

\[
 p(P)\le F_\alpha(t)
 :=2\sin\frac t2+30\sin\frac{\pi-t}{30}. \tag{7.7}
\]

This function increases up to \(\pi/16\) and decreases afterwards. The verifier proves

\[
 F_\alpha(0.189)<L_0,
 \qquad
 F_\alpha(0.204)<L_0,
\]

which gives (7.5).

On this interval, for \(h(t)=2\sin(t/2)\),

\[
 h''(t)=-\frac12\sin\frac t2
 \le-\frac12\sin(0.189/2).
\]

Using \(\sum x_j=0\), strong concavity yields

\[
 32\sin\frac\pi{32}-p(P)
 \ge\frac14\sin(0.189/2)\lVert x\rVert_2^2. \tag{7.8}
\]

The verifier proves

\[
 32\sin\frac\pi{32}-L_0<9.91\times10^{-7}.
\]

and, from (7.8), the strict bound (7.6). \hfill\(\square\)

\begin{center}\rule{0.5\linewidth}{0.5pt}\end{center}

\subsection{Exact exclusion of all but one code class}

Define cumulative angle perturbations

\[
 s_0=s_{16}=0,\qquad
 s_j=\sum_{k=0}^{j-1}x_k\quad(1\le j\le15), \tag{8.1}
\]

so that

\[
 \phi_j=\frac{j\pi}{16}+s_j.
\]

The discrete Dirichlet inequality is

\[
 \sum_{j=1}^{15}s_j^2
 \le\frac{1}{4\sin^2(\pi/32)}\sum_{j=0}^{15}x_j^2
 <26.1\lVert x\rVert_2^2. \tag{8.2}
\]

Let \(\xi=e^{i\pi/16}\). Taylor expansion of (7.3) at the regular point gives

\[
 G_c(x)=b_c+L_c(s)+R_c(s), \tag{8.3}
\]

where

\[
 b_c=\sum_{j=0}^{15}c_j(\xi^{j+1}-\xi^j), \tag{8.4}
\]

\[
 L_c(s)=i\sum_{j=1}^{15}(c_{j-1}-c_j)\xi^js_j, \tag{8.5}
\]

and, since \(|e^{it}-1-it|\le t^2/2\),

\[
 |R_c(s)|\le\sum_{j=1}^{15}s_j^2
 <26.1\lVert x\rVert_2^2. \tag{8.6}
\]

Let \(D\) be the \(15\times15\) Dirichlet path Laplacian, so that \(\lVert x\rVert_2^2=s^TDs\). Write \(B_c\) for the real \(2\times15\) matrix representing (8.5), and define

\[
 \sigma_c^2=\lambda_{\max}(B_cD^{-1}B_c^T). \tag{8.7}
\]

Then \(|L_c(s)|\le\sigma_c\lVert x\rVert_2\). Consequently every code admitting a competitor with perimeter greater than \(L_0\) must satisfy

\[
 |b_c|
 \le0.0065\,\sigma_c+26.1(0.0065)^2. \tag{8.8}
\]

\subsubsection{Lemma 8.1 --- finite exact code certificate}

After normalizing \(c_0=+1\), exactly sixteen of the \(2^{15}=32768\) codes satisfy the necessary condition (8.8). They form precisely one dihedral orbit, represented by

\[
 c_*=(+--+-++-)^2. \tag{8.9}
\]

Every other code violates (8.8). The smallest certified violation margin is greater than

\[
 0.002111068044745996. \tag{8.10}
\]

\paragraph{Proof}

This is the finite computer-assisted part. The accompanying verifier:

\begin{enumerate}
\def\labelenumi{\arabic{enumi}.}
\tightlist
\item
  encloses \(\pi\) by exact rational Machin-series bounds;
\item
  constructs \(\sin(\pi/32)\), \(\cos(k\pi/16)\), and \(\sin(k\pi/16)\) using nested square-root integer intervals at scale \(10^{75}\);
\item
  uses the exact formula \[
  (D^{-1})_{jk}=\frac{\min(j,k)(16-\max(j,k))}{16};
  \]
\item
  computes a lower interval for \(|b_c|\) and an upper interval for \(\sigma_c\);
\item
  makes every keep/exclude decision using integers only;
\item
  scans all \(32768\) normalized codes and checks equality of the survivor set with the explicitly generated dihedral orbit of (8.9).
\end{enumerate}

A fresh rerun reproduces the recorded output byte-for-byte and ends with \texttt{ALL\ ASSERTIONS\ PASSED}. \hfill\(\square\)

\begin{center}\rule{0.5\linewidth}{0.5pt}\end{center}

\subsection{Global uniqueness inside the surviving code}

Fix the representative (8.9). Use variables

\[
 u=(\phi_1,\ldots,\phi_{15})
\]

with \(\phi_0=0\), \(\phi_{16}=\pi\), and minimize the negative perimeter

\[
 f(u)=-\sum_{j=0}^{15}2\sin\frac{\phi_{j+1}-\phi_j}{2}. \tag{9.1}
\]

The two real closure constraints are denoted by \(g(u)=0\). The nonzero coefficients \(a_j=c_{j-1}-c_j\) occur at

\[
 S=\{1,3,4,5,7,8,9,11,12,13,15\},
 \qquad |S|=11. \tag{9.2}
\]

\subsubsection{Lemma 9.1 --- uniform strong convexity}

Throughout the high-perimeter box (7.5),

\[
 \nabla^2f(u)\succeq mI,
 \qquad m>0.0018. \tag{9.3}
\]

\paragraph{Proof}

The Hessian is the weighted Dirichlet path Laplacian whose edge weights are

\[
 \frac12\sin\frac{\alpha_j}{2}.
\]

The smallest eigenvalue of the unweighted Dirichlet path Laplacian is \(4\sin^2(\pi/32)\). Hence

\[
 m\ge2\sin(0.189/2)\sin^2(\pi/32)>0.0018,
\]

with the last inequality certified exactly. \hfill\(\square\)

\subsubsection{Lemma 9.2 --- uniform constraint qualification and multiplier bound}

Throughout the same box,

\[
 \sigma_{\min}(Dg(u))>4.42, \tag{9.4}
\]

and every KKT multiplier \(y\in\mathbb R^2\) satisfies

\[
 \lVert y\rVert_2<0.000151. \tag{9.5}
\]

\paragraph{Proof}

The nonzero columns of \(J=Dg(u)\) are

\[
 a_j(-\sin\phi_j,\cos\phi_j)^T,
 \qquad j\in S,
\]

with \(|a_j|=2\). Therefore

\[
 \lambda_{\min}(JJ^T)
 =2\left(11-\left|\sum_{j\in S}e^{2i\phi_j}\right|\right). \tag{9.6}
\]

At the regular point,

\[
 \sum_{j\in S}e^{2ij\pi/16}=-1. \tag{9.7}
\]

This follows directly by subtracting the complementary switch set from the full set of nontrivial sixteenth roots.

With \(\phi_j=j\pi/16+s_j\),

\[
 \left|\sum_{j\in S}e^{2i\phi_j}\right|
 \le1+2\sum_{j\in S}|s_j|
 \le1+2\sqrt{11}\lVert s\rVert_2. \tag{9.8}
\]

By (8.2) and (7.6),

\[
 \lVert s\rVert_2
 <\frac{0.0065}{2\sin(\pi/32)}<0.0332. \tag{9.9}
\]

The exact verifier combines (9.6)--(9.9) to prove (9.4).

For the gradient, let

\[
 q_j=\cos(\alpha_j/2)-\cos(\pi/32).
\]

The internal gradient is a first difference of the \(q_j\), whose operator norm is at most two. The mean-value theorem and (7.5) give

\[
 \lVert\nabla f(u)\rVert_2
 <\sin(0.102)(0.0065)<0.000663. \tag{9.10}
\]

At a KKT point, \(\nabla f+J^Ty=0\), so

\[
 \lVert y\rVert_2
 \le\frac{\lVert\nabla f\rVert_2}{\sigma_{\min}(J)}
 <\frac{0.000663}{4.42}<0.000151.
\] \hfill\(\square\)

\subsubsection{Lemma 9.3 --- at most one high-perimeter KKT point}

The surviving code has at most one KKT point satisfying (7.5)--(7.6).

\paragraph{Proof}

Suppose \((u,y_u)\) and \((v,y_v)\) are two feasible KKT pairs, and put \(d=u-v\). The line segment between \(u\) and \(v\) remains in the angle box, so Lemma 9.1 gives

\[
 d^T(\nabla f(u)-\nabla f(v))\ge m\lVert d\rVert_2^2. \tag{9.11}
\]

For the closure map, its second directional derivative has norm at most

\[
 2\sum_{j=1}^{15}d_j^2=2\lVert d\rVert_2^2,
\]

because \(|a_j|\le2\). Taylor's theorem, including its factor \(1/2\), therefore gives

\[
 \lVert Dg(u)d\rVert_2\le\lVert d\rVert_2^2,
 \qquad
 \lVert Dg(v)d\rVert_2\le\lVert d\rVert_2^2, \tag{9.12}
\]

where feasibility \(g(u)=g(v)=0\) was used.

Subtract the two stationarity equations and take the inner product with \(d\). Using (9.12),

\[
\begin{aligned}
 m\lVert d\rVert_2^2
 &\le d^T(\nabla f(u)-\nabla f(v))\\
 &\le(\lVert y_u\rVert_2+\lVert y_v\rVert_2)
      \lVert d\rVert_2^2\\
 &<0.000302\lVert d\rVert_2^2.
\end{aligned} \tag{9.13}
\]

But \(m>0.0018>0.000302\). Hence \(d=0\). \hfill\(\square\)

\begin{center}\rule{0.5\linewidth}{0.5pt}\end{center}

\subsection{Proof of the main theorem}

The feasible set of convex hexadecagons of diameter at most one, modulo translation and allowing repeated limiting vertices, is compact, and perimeter is continuous. Hence a global maximizer exists. Bingane's construction makes its perimeter greater than \(L_0\).

Lemma 2.1 gives a strict \(32\)-vertex difference body. Lemma 4.1 and Lemma 6.1 show that every difference-body vertex is on the unit circle. Therefore the maximizer is represented by the saturated angle-code model of Section 7.

Lemma 7.1 puts it in the certified regular neighborhood. Lemma 8.1 forces its code into the single dihedral class of \(c_*\). The angle inequalities are strict, and Lemma 9.2 gives full row rank of the closure Jacobian, so the maximizer is a KKT point of the fixed-code problem. Lemma 9.3 says there is at most one such point in the entire region where a global optimizer can lie.

Existence of the original global maximizer supplies at least one such point. Therefore there is exactly one normalized fixed-code optimizer. The sixteen surviving sign strings are merely cyclic/reversal/sign representatives of the same difference body and reconstructed polygon; Lemma 3.1 removes translation ambiguity. This proves uniqueness up to congruence and relabeling. \hfill\(\square\)

\begin{center}\rule{0.5\linewidth}{0.5pt}\end{center}

\subsection{Numerical identification (not used for global validity)}

A high-precision Newton solve for the unique point gives the half-circle gaps

\[
\begin{array}{rcl}
0.19831631349051794937,&&
0.19450334649319957491,\\
0.19450334649319957491,&&
0.19774551481013478084,\\
0.19499397164222383569,&&
0.19716378410929702498,\\
0.19716378410929702498,&&
0.19640626564702685354,\\
0.19640626564702685354,&&
0.19716378410929702498,\\
0.19716378410929702498,&&
0.19499397164222383569,\\
0.19774551481013478084,&&
0.19450334649319957491,\\
0.19450334649319957491,&&
0.19831631349051794937.
\end{array}
\]

They sum to \(\pi\), satisfy the closure equations for \(c_*\), and yield

\[
 p_{16}=3.13654771648660738608596703194\ldots .
\]

For a publication-quality numerical theorem, these coordinates should still be enclosed by an independent interval-Newton computation. That is not needed for the existence and global-uniqueness argument above, because existence comes from compactness and uniqueness from Lemma 9.3.

\begin{center}\rule{0.5\linewidth}{0.5pt}\end{center}

\subsection{Certificate audit and cross-checks}

The following previously dangerous points have been explicitly handled:

\begin{enumerate}
\def\labelenumi{\arabic{enumi}.}
\tightlist
\item
  \textbf{Degenerate difference bodies:} excluded by the strict lower bound versus the best 30-gon in the disk.
\item
  \textbf{Perturbations leaving the original polygon problem:} prevented by Lemma 3.1.
\item
  \textbf{Adjacent moved vertices and endpoint \(z_0\):} included in Lemma 4.1; an outer incident edge still changes.
\item
  \textbf{A merely convex, locally constant perturbation:} a generic direction makes at least one norm term strictly convex.
\item
  \textbf{Use of KKT without constraint qualification:} MFCQ is constructed explicitly in Lemma 6.1.
\item
  \textbf{A hidden unsaturated radius:} contradicted quantitatively by projective angular separation.
\item
  \textbf{The earlier wrong Poincare constant:} the corrected Dirichlet denominator uses \(\sin^2(\pi/32)\), not \(\sin^2(\pi/16)\).
\item
  \textbf{Floating-point code decisions:} all are replaced by outward integer intervals.
\item
  \textbf{Objective normalization:} all gradients and Hessians here are derived from the full objective (9.1), avoiding the factor-two inconsistency in the displayed derivative formulas of the prior numerical paper.
\item
  \textbf{Local versus global fixed-code optimization:} Lemma 9.3 proves uniqueness throughout the entire high-perimeter region, not merely convergence of Newton's method near one point.
\end{enumerate}

The ancillary package includes the exact verifier, its recorded output and
hashes, and a separately organized direct scan of the normalized code space.
The feasible lower bound imports Bingane's published explicit construction;
the local verifier independently encloses its perimeter formula.  An
interval-Newton enclosure would only be needed to certify additional decimal
coordinates of the optimizer, not for the existence and uniqueness theorem.

\begin{center}\rule{0.5\linewidth}{0.5pt}\end{center}

%% file: appendix_n32.tex
\subsection{Statement}

Let \(P\) range over convex 32-gons of diameter at most one. Then the global maximum of \(\operatorname{per}(P)\) is attained by a unique congruence class. Its fully saturated difference body has sign-code orbit represented by

\begin{verbatim}
+++-+-+-++--+--+--++-+-+-+++--++
\end{verbatim}

or, in an axially symmetric representative,

\begin{verbatim}
+-++--+-+-+---++--+++-+-+-++--+-.
\end{verbatim}

The unique maximizing point is the unique high-perimeter KKT point of the fixed-code problem for this orbit. A non-rigorous high-precision Newton evaluation gives

\[
 3.1403311569546193658254013805774586723120530983\ldots,
\]

in agreement with Mulansky--Potschka. The proof itself does not depend on this decimal expansion.

Throughout, put

\[
 n=32,\qquad \theta=\frac{\pi}{32},\qquad
 U=64\sin\frac{\pi}{64},\qquad
 \varepsilon=1.35\cdot10^{-13}.
\]

Here \(U\) is Reinhardt's perimeter upper bound for a small 32-gon.

\subsection{Difference bodies and reconstruction}

For a convex polygon \(P\), let

\[
 Z=P-P.
\]

Then

\[
 Z\subseteq \overline B(0,1),\qquad
 \operatorname{per}(Z)=2\operatorname{per}(P),
\]

and \(Z\) is centrally symmetric with at most 64 edges.

\subsubsection{Lemma 2.1 (reconstruction)}

Suppose \(Z\) is a nondegenerate centrally symmetric convex 64-gon, meaning that all 64 listed points are genuine vertices. Label one half of its vertices cyclically by

\[
 z_0,z_1,\ldots,z_{31},\qquad z_{32}=-z_0,
\]

and put \(e_j=z_{j+1}-z_j\). If a code \(c_j \in\{+1,-1\}\) satisfies

\[
 \sum_{j=0}^{31}c_je_j=0,                                      \tag{2.1}
\]

then the vectors \(c_j e_j\), sorted by direction, are the edge vectors of a closed convex 32-gon \(P_c\), and

\[
 P_c-P_c=Z.
\]

\textbf{Proof.} The selected set contains exactly one vector from each antipodal pair \(\{e_j,-e_j\}\). Equation (2.1) gives closure. Sorting nonzero vectors of distinct directions gives a convex polygon with positive exterior angles. The cyclic merge of its edge list with the edge list of its negative is precisely the complete cyclic edge list of \(Z\). Convex polygons with the same cyclic edge list differ only by a translation; since both difference bodies are centered at the origin, they coincide. \hfill\(\square\)

Consequently, sufficiently small perturbations of the \(z_j\) that preserve (2.1), the disk constraints, genuine-vertex inequalities and cyclic order remain feasible for the original small-polygon problem.

Summation by parts rewrites (2.1) as

\[
 -(c_0+c_{31})z_0+\sum_{j=1}^{31}(c_{j-1}-c_j)z_j=0.             \tag{2.2}
\]

All variable coefficients therefore belong to \(\{-2,0,2\}\).

\subsubsection{Lemma 2.2 (at most one interior half-vertex)}

At a local perimeter maximum whose difference body has 64 genuine vertices, at most one of \(z_0,\ldots,z_{31}\) lies strictly inside the unit disk. If it exists, its coefficient in (2.2) is nonzero.

\textbf{Proof.} Write (2.2) as \(\sum a_j z_j=0\). If two interior vertices \(z_r,z_s\) have nonzero coefficients, use the two-sided perturbation

\[
 z_r(t)=z_r+t a_s h,\qquad z_s(t)=z_s-t a_r h.                  \tag{2.3}
\]

If an interior vertex has zero coefficient, move that vertex alone. In all cases the equality is preserved. Since the moved vertices are interior and cyclic order and the genuine-vertex inequalities are open conditions, both signs of small \(t\) are feasible. The perimeter is a sum of functions of the form \(\lVert u+tv\rVert\), hence is convex in \(t\). Choosing \(h\) outside the finite collection of lines parallel to affected edges makes at least one summand strictly convex. A two-sided local maximum is impossible. The same one-variable argument shows that a unique interior vertex cannot have zero coefficient. \hfill\(\square\)

This proof covers adjacent moved vertices and the wrap-around edge; possible cancellation on their common edge does not affect the other incident edges.

\subsection{A rigorously feasible near-regular polygon}

Use the axially symmetric code

\begin{verbatim}
c = +-++--+-+-+---++--+++-+-+-++--+-.
\end{verbatim}

It satisfies \(c_{31-j}=-c_j\). Define rational perturbations \(s_0=s_{16}=s_{32}=0\), \(s_{32-j}=-s_j\), and for \(j=1,\ldots,15\), set \(s_j=m_j/10^{20}\), where

\begin{verbatim}
m = (-84838116394748, -7759417296262, -90254740014369,
     -172750062732476, -104929943709607, -37109824686737,
     -104570029200993, -46261230637792, -102998890440414,
     -56520596873701, -100433725625308, -67650481919513,
     -34867238213719, -2083994507925, -1041997253963).
\end{verbatim}

For

\[
 \phi_j(t)=j\theta+t s_j,
\]

symmetry makes the closure residual purely imaginary. If \(a_j=c_{j-1}-c_j\), its imaginary component is

\[
 H(t)=2\sum_{j=1}^{15}a_j\sin(j\theta+t s_j)+a_{16}.             \tag{3.1}
\]

The exact interval verifier proves

\[
 H(0.99999953359)>5.53\cdot10^{-17},
\]

\[
 H(0.99999953360)<-7.84\cdot10^{-17}.                            \tag{3.2}
\]

Hence the intermediate value theorem gives a root \(t_\times\) in this interval. All angle gaps at every point of the bracket lie in \((0.09,0.107)\). At the root, Lemma 2.1 reconstructs a feasible small 32-gon, and exact interval evaluation gives

\[
 U-\operatorname{per}(P(t_\times))
 <1.336299789852041\cdot10^{-13}<\varepsilon.                   \tag{3.3}
\]

Thus every global maximizer has deficit at most \(\varepsilon\).

The interval implementation obtains a rational enclosure of \(\pi\) from Machin's formula

\[
 \pi=16\arctan\frac15-4\arctan\frac1{239}
\]

and uses alternating Taylor bounds for sine. Its certified \(\pi\) interval has width below \(6.05\times 10^{-44}\).

\subsection{Saturation for the global maximizer}

This section is independent of any sign-code enumeration.

\subsubsection{The difference body has 64 genuine vertices}

If \(Z\) had fewer than 64 vertices, central symmetry would give at most 62. The maximum perimeter of an \(m\)-gon in the unit disk is \(2m \sin(\pi/m)\), so

\[
 \operatorname{per}(P)\le62\sin\frac{\pi}{62}=U_{31}.
\]

For \(U(t)=2t \sin(\pi/(2t))\),

\[
 U'(t)=2(\sin a-a\cos a)
      =2\int_0^a u\sin u\,du
      \ge \frac{\pi^2}{6t^3},\qquad a=\frac{\pi}{2t}.
\]

Therefore

\[
 U-U_{31}\ge\frac{\pi^2}{6\cdot32^3}
 >\frac{3}{2\cdot32^3}>\varepsilon.                             \tag{4.1}
\]

This contradicts (3.3). Hence \(Z\) has exactly 64 genuine vertices.

\subsubsection{Uniform localization of its normal cones}

For a vertex \(z_j\), write

\begin{itemize}
\tightlist
\item
  \(r_j=\lVert z_j\rVert\);
\item
  \(\omega_j\) for its outward normal-cone width;
\item
  \(\eta_j\) for the cone midpoint direction;
\item
  \(\phi_j\) for the polar direction of \(z_j\);
\item
  \(\delta_j=\eta_j-\phi_j\).
\end{itemize}

Cauchy's formula is

\[
 \operatorname{per}(Z)
 =\sum_{j=0}^{63}2r_j\sin\frac{\omega_j}{2}\cos\delta_j,
 \qquad \sum_j\omega_j=2\pi.                                   \tag{4.2}
\]

Put \(\mu=\pi/32\). If one normal width is \(t\), Jensen's inequality bounds the right side by

\[
 F(t)=2\sin\frac t2
 +126\sin\frac{2\pi-t}{126}.
\]

Strong concavity of \(2 \sin(t/2)\) shows that at either \(t=\mu/2\) or \(t=3\mu/2\),

\[
 2U-F(t)\ge\frac{\mu^2\sin(\mu/4)}{16}
 >\frac{9}{32^2\cdot64\cdot16}>2\varepsilon.                   \tag{4.3}
\]

Since \(\operatorname{per}(Z)\ge 2U-2\varepsilon\), monotonicity of \(F\) on each side of \(\mu\) gives

\[
 \frac\mu2<\omega_j<\frac{3\mu}{2}.                             \tag{4.4}
\]

The nonnegative deficit decomposition from (4.2) now gives

\[
 1-r_j<64\varepsilon<\frac12,                                   \tag{4.5}
\]

and

\[
 1-\cos\delta_j<128\varepsilon.                                 \tag{4.6}
\]

If \(\lvert\delta_j\rvert\ge \mu/16=\pi/512\), then

\[
 1-\cos\delta_j\ge\frac{2\delta_j^2}{\pi^2}
 \ge\frac1{131072}>128\varepsilon,
\]

contradicting (4.6). Thus

\[
 |\delta_j|<\frac\mu{16}.                                       \tag{4.7}
\]

Consecutive normal midpoints differ by \((\omega_j+\omega_{j+1})/2>\mu/2\); hence consecutive radial directions differ by more than

\[
 \frac\mu2-2\frac\mu{16}=\frac{3\mu}{8}.                        \tag{4.8}
\]

The same lower bound holds for the projective angular distance between any two distinct half-vertices.

\subsubsection{KKT excludes the last possible interior radius}

By Lemma 2.2, suppose there is exactly one interior half-vertex \(z_r\), with \(a_r=\pm 2\). MFCQ holds explicitly: move every active circle vertex radially inward and use the free displacement of \(z_r\) to repair the two components of (2.2). The equality Jacobian has rank two because the \(z_r\) block is \(a_r I_2\).

For the objective \(\operatorname{per}(P)=\operatorname{per}(Z)/2\), the gradient with respect to a half-vertex is

\[
 g_j=2\sin\frac{\omega_j}{2}e^{i\eta_j}.                         \tag{4.9}
\]

At the interior vertex, stationarity gives \(g_r=a_r \lambda\), so

\[
 ||\lambda||=\sin\frac{\omega_r}{2}
\]

and the projective direction of \(\lambda\) is \(\eta_r\). There is another index \(j \ne r\) with \(a_j \ne 0\); otherwise (2.2) would force \(z_r=0\), impossible for a vertex of the full-dimensional difference body. This second vertex is on the unit circle. Projecting its KKT equation onto its tangent gives

\[
 |\sin(\eta_r-\phi_j)|
 =\frac{\sin(\omega_j/2)}{\sin(\omega_r/2)}|\sin\delta_j|.       \tag{4.10}
\]

By (4.4), the ratio is less than

\[
 \frac{\sin(3\mu/4)}{\sin(\mu/4)}<3.
\]

If \(\vartheta\) is the projective distance from \(\eta_r\) to \(\phi_j\), then \(\sin(\vartheta)\ge 2\vartheta/\pi\) and (4.7)--(4.10) imply

\[
 \vartheta<\frac{3\pi\mu}{32}.
\]

Consequently

\[
 d_{\mathbb{RP}^1}(\phi_r,\phi_j)
 <\frac\mu{16}+\frac{3\pi\mu}{32}
 =\frac{(2+3\pi)\mu}{32}
 <\frac{3\mu}{8},                                                \tag{4.11}
\]

using \(\pi<10/3\). This contradicts (4.8). Hence every one of the 64 difference- body vertices is on the unit circle.

\subsection{The saturated angle/code model}

Rotate so that

\[
 \phi_0=0,\qquad \phi_{32}=\pi,
\]

and put

\[
 \alpha_j=\phi_{j+1}-\phi_j>0.
\]

For a code \(c \in\{\pm 1\}^{32}\), the problem is

\[
 \max \sum_{j=0}^{31}2\sin\frac{\alpha_j}{2},                   \tag{5.1}
\]

subject to

\[
 G_c(\phi)=\sum_{j=0}^{31}c_j
 (e^{i\phi_{j+1}}-e^{i\phi_j})=0.                               \tag{5.2}
\]

Write

\[
 \phi_j=j\theta+s_j,\quad s_0=s_{32}=0,\quad
 x_j=s_{j+1}-s_j.
\]

Then \(\sum x_j=0\).

\subsubsection{Lemma 5.1 (all high-perimeter gaps are localized)}

Every feasible point with deficit at most \(\varepsilon\) satisfies

\[
 0.09<\alpha_j<0.107.                                            \tag{5.3}
\]

\textbf{Proof.} If one gap equals \(t\), Jensen bounds the other 31 gaps by making them equal. The resulting one-variable upper envelope is increasing for \(t<\theta\) and decreasing for \(t>\theta\). At \(t=0.09\), strong concavity and \(\pi>3\) give a deficit larger than

\[
 \frac14\left(0.045-\frac{0.045^3}{6}\right)
 \frac{32}{31}\left(\frac3{32}-0.09\right)^2>\varepsilon.
\]

At \(t=0.107\), use \(\theta<11/112\) and the lower bound \((3-0.107)/31\) for the common remaining gap; the analogous rational inequality is again larger than \(\varepsilon\). \hfill\(\square\)

On this box, strong concavity yields

\[
 U-\operatorname{per}(P)
 \ge \kappa ||x||_2^2,\qquad
 \kappa=\frac14\sin(0.045)
 >\frac14\left(0.045-\frac{0.045^3}{6}\right).                  \tag{5.4}
\]

The verifier checks that the right-hand lower bound times \((3.47\times 10^{-6})^{2}\) exceeds \(\varepsilon\). Therefore

\[
 ||x||_2<R_0:=3.47\cdot10^{-6}.                                 \tag{5.5}
\]

\subsection{A uniform code-screening inequality}

Let \(\xi=e^{i \theta}\). At the regular point,

\[
 b_c=(\xi-1)\sum_{j=0}^{31}c_j\xi^j.                             \tag{6.1}
\]

Taylor expansion gives

\[
 G_c(s)=b_c+L_c(s)+R_c(s),                                       \tag{6.2}
\]

where

\[
 L_c(s)=i\sum_{j=0}^{31}c_j
 (\xi^{j+1}s_{j+1}-\xi^j s_j).                                  \tag{6.3}
\]

The scalar exponential remainder and the Dirichlet Poincare inequality give

\[
 |R_c(s)|\le||s||_2^2
 \le\frac{||x||_2^2}{4\sin^2(\pi/64)}
 \le256||x||_2^2.                                                \tag{6.4}
\]

The following uniform linear estimate is useful beyond \(n=32\).

\subsubsection{Lemma 6.1 (twisted-Dirichlet operator bound)}

For general \(n\),

\[
 |L_c(s)|\le\sqrt{2n}\cos\frac{\pi}{2n}\,||x||_2.               \tag{6.5}
\]

\textbf{Proof.} Cauchy--Schwarz gives

\[
 |L_c(s)|^2\le n\sum_{j=0}^{n-1}|\xi s_{j+1}-s_j|^2.
\]

The twisted Dirichlet quadratic on \(s_1,\ldots,s_{n-1}\) has eigenvalues

\[
 \mu_k=2-2\cos\theta\cos(k\theta),
\]

whereas the ordinary Dirichlet energy \(\sum x_j^{2}\) has eigenvalues

\[
 \lambda_k=2-2\cos(k\theta).
\]

For \(k\ge 1\),

\[
 \mu_k\le2\cos^2(\theta/2)\lambda_k,
\]

because half the difference between the right and left sides is \(\cos(\theta)-\cos(k \theta)\ge 0\). This proves (6.5). \hfill\(\square\)

For \(n=32\), (6.5) is less than \(8\lVert x\rVert\). Thus every code capable of attaining deficit at most \(\varepsilon\) must satisfy

\[
 |b_c|<8R_0+256R_0^2<3.41\cdot10^{-5}.                           \tag{6.6}
\]

\subsection{Exact meet-in-the-middle code certificate}

Overall negation of a code is immaterial, so normalize \(c_0=+1\). There are \(2^{31}=2{,}147{,}483{,}648\) normalized half-codes.

The verifier encloses every coordinate of every edge

\[
 e_j=\xi^{j+1}-\xi^j
\]

by a rational interval and chooses an integer vector \(E_j\) at scale \(M=10^{18}\) such that each coordinate differs from \(E_j/M\) by less than \(1/M\). It splits the code into blocks \(0,\ldots,15\) and \(16,\ldots,31\), forms all \(2^{15}\) and \(2^{16}\) signed integer sums, sorts the second block by its first coordinate, and performs an exact integer range search. Since the total Euclidean rounding error is less than \(64/M\), every true residual satisfying (6.6) is included in the integer search circle.

The exact result is:

\[
 \boxed{96\text{ normalized codes survive}.}                    \tag{7.1}
\]

They are exactly three disjoint dihedral orbits, each of size 32, represented by

\begin{verbatim}
A = ++++--+--+-+-+-+++---+-+-+-++-++
B = +++-+-+-++--+--+--++-+-+-+++--++
C = ++-++-+-+-+-+--++--+-+-+-+-++-++.
\end{verbatim}

The axially symmetric code of Section 3 belongs to orbit \(B\).

This certificate scans every normalized code, not only axially symmetric ones. It therefore does not assume Mossinghoff's symmetry conjecture.

\subsection{Exact exclusion of orbits A and C}

For a fixed code let

\[
 \sigma_c=\sup_{s\ne0}\frac{|L_c(s)|}{||x||_2}.
\]

Writing \(a_j=c_{j-1}-c_j\), the squared norm is the largest eigenvalue of the 2-by-2 matrix

\[
 K_c=B_cD^{-1}B_c^T,                                             \tag{8.1}
\]

where \(D\) is the Dirichlet path matrix and the rows of \(B_c\) are

\[
 (-a_j\sin(j\theta))_{j=1}^{31},\qquad
 ( a_j\cos(j\theta))_{j=1}^{31}.                                \tag{8.2}
\]

The verifier uses

\[
 (D^{-1})_{jk}=\frac{\min(j,k)(32-\max(j,k))}{32}                \tag{8.3}
\]

and exact fixed-point error bounds. Gershgorin gives the certified squared- norm upper bounds

\[
 \sigma_A^2<16.36425,\qquad
 \sigma_B^2<16.45308,\qquad
 \sigma_C^2<16.25477,                                           \tag{8.4}
\]

so in particular every survivor has \(\sigma_c<4.1\).

The quantities \(\lvert b_c\rvert\) and \(\sigma_c\) are invariant under the full dihedral action and overall sign change: rotations multiply the complex residual and linear map by a unit complex number, while reflections conjugate them and reverse the Dirichlet path. It is therefore enough to check one representative of each orbit.

The same exact residual intervals prove

\[
 |b_A|>2.31\cdot10^{-5},\qquad
 |b_C|>1.66\cdot10^{-5}.                                        \tag{8.5}
\]

For any high-perimeter feasible point of either code, (6.2), (6.4), (5.5), and (8.4) imply

\[
 ||x||_2>
 \frac{|b_c|-256R_0^2}{4.1}.                                    \tag{8.6}
\]

Combining (8.6) with (5.4), the exact rational lower deficits are

\[
 U-\operatorname{per}(P)>3.5689976\cdot10^{-13}
 \quad\text{for orbit A},                                       \tag{8.7}
\]

and

\[
 U-\operatorname{per}(P)>1.8428631\cdot10^{-13}
 \quad\text{for orbit C}.                                       \tag{8.8}
\]

Both exceed \(\varepsilon\), so neither orbit can contain a global maximizer. Orbit \(B\) is the unique surviving code class.

\subsection{Continuous global uniqueness inside orbit B}

Use the axial representative of Section 3. It has 21 switch indices

\[
 J=\{j: c_{j-1}\ne c_j\}.
\]

Let \(f=-\operatorname{per}(P)\) and let \(g=(\Re G_c,\Im G_c)\).

\subsubsection{Strong convexity}

The Hessian of \(f\) is the weighted Dirichlet path matrix with edge weights

\[
 w_j=\frac12\sin\frac{\alpha_j}{2}.
\]

Using (5.3), \(\sin(\pi/64)\ge 1/32\), and the alternating lower bound for \(\sin(0.045)\),

\[
 \nabla^2 f\succeq m_0 I,
 \qquad
 m_0:=\frac{0.045-0.045^3/6}{512}
 >8.78609\cdot10^{-5}.                                          \tag{9.1}
\]

\subsubsection{Constraint qualification and multiplier bound}

At a point \(\phi\), the nonzero columns of \(Dg\) are \(a_j i e^{i\phi_j}\). Hence

\[
 \sigma_{\min}(Dg)^2
 =2\left(21-\left|\sum_{j\in J}e^{2i\phi_j}\right|\right).       \tag{9.2}
\]

The exact fixed-point certificate shows that at the regular point

\[
 \left|\sum_{j\in J}e^{2ij\theta}\right|<2.                     \tag{9.3}
\]

Furthermore, (6.4) gives \(\lVert s\rVert<16R_0\). Since \(\sqrt{21}<5\),

\[
 \left|\sum_{j\in J}
 (e^{2i\phi_j}-e^{2ij\theta})\right|
 \le2\sqrt{21}\,||s||<160R_0<1.                                 \tag{9.4}
\]

Equations (9.2)--(9.4) imply

\[
 \sigma_{\min}(Dg)>6.                                           \tag{9.5}
\]

Thus LICQ holds throughout the high-perimeter ball.

At the regular point the objective gradient vanishes. If \(r_j=\cos(\alpha_j/2)-\cos(\theta/2)\), then the path difference operator and (5.3) give

\[
 ||\nabla f||_2\le\sin(0.107/2)||x||_2
 <\frac{0.107}{2}R_0.                                           \tag{9.6}
\]

At a KKT point, \(\nabla f+(Dg)^T y=0\), so (9.5)--(9.6) give

\[
 ||y||<Y_0:=\frac{(0.107/2)R_0}{6}
 <3.095\cdot10^{-8}.                                             \tag{9.7}
\]

\subsubsection{Two KKT points are impossible}

Suppose \(u,v\) are two high-perimeter feasible KKT points and put \(d=u-v\). For the complex closure constraint, \(\lvert a_j\rvert\le 2\) and the scalar exponential remainder imply

\[
 ||g(u)-g(v)-Dg(v)d||_2\le||d||_2^2,                             \tag{9.8}
\]

and the analogous bound holds with \(u\) and \(v\) interchanged. Subtract the two stationarity equations and take the inner product with \(d\). Strong convexity, feasibility, (9.8), and (9.7) give

\[
 m_0||d||_2^2\le2Y_0||d||_2^2.                                  \tag{9.9}
\]

But the exact verifier proves

\[
 m_0-2Y_0>8.7799\cdot10^{-5}>0.                                 \tag{9.10}
\]

Therefore \(d=0\). Orbit \(B\) has at most one high-perimeter KKT point after fixing rotation.

\subsection{Completion of the global argument}

The original polygon problem has a global maximizer by compactness. Sections 3--4 show that every global maximizer has a saturated 64-vertex difference body. Sections 5--8 show that its code must lie in orbit \(B\). Its angle gaps are strictly between \(0.09\) and \(0.107\), so no order inequality is active, and Section 9 gives LICQ; hence it is a KKT point. Section 9 proves that such a point is unique for a normalized representative. Dihedral changes of code, rotation, reflection and translation produce congruent polygons.

Thus there is exactly one maximizing congruence class, and its code is orbit \(B\). This proves the theorem.

\subsection{What is and is not machine-assisted}

Analytic:

\begin{enumerate}
\def\labelenumi{\arabic{enumi}.}
\tightlist
\item
  difference-body reduction and reconstruction;
\item
  at-most-one-interior-vertex perturbation;
\item
  MFCQ/KKT saturation;
\item
  gap localization and strong-concavity radius bound;
\item
  uniform twisted-Dirichlet operator estimate;
\item
  residual elimination inequalities;
\item
  best-code KKT uniqueness.
\end{enumerate}

Finite certificate:

\begin{enumerate}
\def\labelenumi{\arabic{enumi}.}
\tightlist
\item
  the one-dimensional feasible root and its perimeter bound;
\item
  interval enclosures of roots of unity;
\item
  exact coverage of all \(2^{31}\) normalized codes by meet-in-the-middle;
\item
  the partition of 96 survivors into three dihedral orbits;
\item
  code-specific operator-norm bounds and residual lower bounds;
\item
  the regular switch-root bound used for LICQ.
\end{enumerate}

The verifier uses only the Python standard library (\texttt{Fraction}, integers, \texttt{bisect}, and elementary data structures). Its proof decisions do not use binary floating point. Decimal-looking values in the printed report are only human-readable renderings after all assertions have passed.

\subsection{Certificate audit and cross-checks}

The ancillary package includes the exact meet-in-the-middle verifier, its
recorded output and hashes, a separate normalized-code scan, an independent
orbit comparison, and a high-precision stationary-point cross-check.  These
artifacts reproduce the 96 survivors, their three dihedral orbits, and the
winning orbit used above.  An interval-Newton enclosure would only be needed
to certify additional decimal coordinates of the optimizer.

%% file: appendix_n64.tex
\subsection{Difference-body reduction and reconstruction}

Throughout this appendix, put
\[
 n=64,\qquad \theta=\frac{\pi}{64},\qquad
 U=128\sin\frac{\pi}{128},\qquad
 \varepsilon=2.84\cdot10^{-23}.
\]
Here \(U\) is Reinhardt's perimeter upper bound for a small 64-gon.

For a convex polygon \(P\), let \(Z=P-P\). Then

\[
 Z\subseteq \overline B(0,1),\qquad
 \operatorname{per}(Z)=2\operatorname{per}(P),
\]

and \(Z\) is centrally symmetric with at most 128 edges.

\subsubsection{Lemma 1.1 (reconstruction)}

Let \(Z\) be a centrally symmetric convex 128-gon with all listed vertices genuine. Label one half cyclically by

\[
 z_0,\ldots,z_{63},\qquad z_{64}=-z_0,
\]

and put \(e_j=z_{j+1}-z_j\). If a code \(c_j \in\{\pm 1\}\) satisfies

\[
 \sum_{j=0}^{63}c_je_j=0,                                      \tag{1.1}
\]

then the vectors \(c_j e_j\), sorted by direction, are the edges of a closed convex 64-gon \(P_c\), and

\[
 P_c-P_c=Z.
\]

\textbf{Proof.} The selected vectors contain exactly one vector from every antipodal pair \(\{e_j,-e_j\}\). They are nonzero and have distinct directions. Equation (1.1) gives closure. Sorting by direction therefore produces a strictly convex polygon. Merging its edge list with the edge list of its negative gives the full cyclic edge list of \(Z\); centered polygons with the same cyclic edge list coincide. \hfill\(\square\)

Summation by parts rewrites (1.1) as

\[
 -(c_0+c_{63})z_0+\sum_{j=1}^{63}(c_{j-1}-c_j)z_j=0,             \tag{1.2}
\]

whose coefficients lie in \(\{0,\pm 2\}\).

\subsubsection{Lemma 1.2 (at most one interior half-vertex)}

At a local maximum with a 128-vertex difference body, at most one of \(z_0,\ldots,z_{63}\) lies strictly inside the unit disk, and such a vertex has a nonzero coefficient in (1.2).

\textbf{Proof.} If two interior vertices have nonzero coefficients \(a_r,a_s\), use

\[
 \delta z_r=a_sh,\qquad \delta z_s=-a_rh.
\]

If an interior vertex has zero coefficient, move it alone. For sufficiently small perturbations of either sign, the disk constraints, cyclic order and genuine-vertex conditions remain valid. Along the perturbation the perimeter is a sum of norms of affine functions, hence convex. A generic \(h\) makes one affected norm strictly convex, contradicting a two-sided local maximum. \hfill\(\square\)

\subsection{A rigorously feasible near-regular 64-gon}

Use the winning axial code \(c\), satisfying

\[
 c_{63-j}=-c_j.
\]

Let \(s_0=s_{32}=s_{64}=0\), \(s_{64-j}=-s_j\), and set

\[
 s_j=m_j/10^{40}\qquad(1\le j\le31),
\]

where the integer list \(m_j\) is contained in \texttt{n64\_analytic\_verifier.py}. Put

\[
 \phi_j(t)=j\theta+t s_j.
\]

Writing \(a_j=c_{j-1}-c_j\), axial symmetry makes the closure residual purely imaginary:

\[
 G(t)=iH(t),\qquad
 H(t)=2\sum_{j=1}^{31}a_j\sin(j\theta+t s_j)+a_{32}.             \tag{2.1}
\]

Exact rational interval arithmetic proves

\[
 H(0.999999999987873350)>5.67\cdot10^{-28},
\]

\[
 H(0.999999999987873353)<-6.00\cdot10^{-28}.                    \tag{2.2}
\]

Hence an exact root \(t_\times\) exists in that bracket. At every point in the bracket all angle gaps lie in \((0.045,0.054)\). The same interval calculation gives

\[
 U-\operatorname{per}(P(t_\times))
 <2.835602\cdot10^{-23}<\varepsilon.                            \tag{2.3}
\]

Lemma 1.1 therefore reconstructs a genuinely feasible small 64-gon.

The verifier obtains \(\pi\) from Machin's formula and bounds sine by alternating Taylor intervals. Its certified \(\pi\) interval has width below \(6.05\times 10^{-44}\).

\subsection{Saturation of every global maximizer}

\subsubsection{The difference body has 128 genuine vertices}

If \(Z\) had fewer than 128 vertices, central symmetry would give at most 126, so

\[
 \operatorname{per}(P)\le126\sin\frac\pi{126}=U_{63}.
\]

For \(V(t)=2t \sin(\pi/(2t))\), the standard derivative estimate gives

\[
 U-U_{63}>\frac{3}{2\cdot64^3}>\varepsilon,
\]

contradicting (2.3). Thus \(Z\) has 128 genuine vertices.

\subsubsection{Near-regular normal cones}

For a difference-body vertex write

\[
 z_j=r_je^{i\phi_j},
\]

and let \(\omega_j\) be its normal-cone width, \(\eta_j\) its normal-cone midpoint, and \(\delta_j=\eta_j-\phi_j\). Cauchy's formula is

\[
 \operatorname{per}(Z)=
 \sum_{j=0}^{127}2r_j\sin\frac{\omega_j}{2}\cos\delta_j,
 \qquad \sum_j\omega_j=2\pi.                                  \tag{3.1}
\]

Put \(\mu=\pi/64\). Fixing one width and averaging the other 127 widths with Jensen's inequality, together with strong concavity, shows that a width outside

\[
 \frac\mu2<\omega_j<\frac{3\mu}{2}                              \tag{3.2}
\]

would cost more than \(2\varepsilon\). The verifier checks both Jensen boundary values and, in particular, the coarse lower bound

\[
 \frac{9}{64^2\cdot128\cdot16}>2\varepsilon.
\]

The nonnegative deficit decomposition in (3.1) then yields

\[
 1-r_j<128\varepsilon,
 \qquad 1-\cos\delta_j<256\varepsilon.                          \tag{3.3}
\]

Consequently

\[
 |\delta_j|<\frac\mu{16}.                                      \tag{3.4}
\]

Successive radial directions, and hence any two different half-vertices in projective angle, are separated by more than

\[
 \frac\mu2-2\frac\mu{16}=\frac{3\mu}{8}.                       \tag{3.5}
\]

\subsubsection{KKT excludes the last interior radius}

By Lemma 1.2 suppose only \(z_r\) is interior. Its coefficient is \(a_r=\pm 2\). MFCQ holds: move active circle vertices radially inward and use the free 2-dimensional displacement of \(z_r\) to repair (1.2).

For the objective \(\operatorname{per}(Z)/2\), the half-vertex gradient is

\[
 g_j=2\sin\frac{\omega_j}{2}e^{i\eta_j}.                        \tag{3.6}
\]

At \(r\), stationarity gives \(g_r=a_r \lambda\), so the projective direction of \(\lambda\) is \(\eta_r\). There is another \(j \ne r\) with \(a_j \ne 0\); otherwise (1.2) would force \(z_r=0\). This second vertex is on the unit circle. Project its KKT equation onto its circle tangent to obtain

\[
 |\sin(\eta_r-\phi_j)|
 =\frac{\sin(\omega_j/2)}{\sin(\omega_r/2)}|\sin\delta_j|.      \tag{3.7}
\]

By (3.2) the ratio is less than three. Equations (3.4), (3.7) and \(\sin t\ge 2t/\pi\) give

\[
 d_{\mathbb {RP}^1}(\phi_r,\phi_j)
 <\frac{(2+3\pi)\mu}{32}<\frac{3\mu}{8},
\]

contradicting (3.5). Thus all 128 difference-body vertices lie on the unit circle. This completes the saturation argument for \(n=64\).

\subsection{Saturated angle model and localization}

Rotate so that \(\phi_0=0\), \(\phi_{64}=\pi\), and put

\[
 \alpha_j=\phi_{j+1}-\phi_j,
 \qquad
 \phi_j=j\theta+s_j,
 \qquad x_j=s_{j+1}-s_j.
\]

The objective is

\[
 F=\sum_{j=0}^{63}2\sin\frac{\alpha_j}{2},                     \tag{4.1}
\]

and closure is

\[
 G_c=\sum_{j=0}^{63}c_j(e^{i\phi_{j+1}}-e^{i\phi_j})=0.         \tag{4.2}
\]

A one-gap Jensen bound, with both boundary values checked by the core analytic verifier, proves that every feasible point with deficit at most \(\varepsilon\) satisfies

\[
 0.045<\alpha_j<0.054.                                          \tag{4.3}
\]

On this box, strong concavity gives

\[
 U-F\ge \kappa\|x\|_2^2,
 \qquad
 \kappa>\frac14\left(0.0225-\frac{0.0225^3}{6}\right).
\]

The exact verifier checks

\[
 \boxed{\|x\|_2<R_0:=7.11\cdot10^{-11}.}                       \tag{4.4}
\]

\subsection{Uniform regular-residual screen}

Let \(\xi=e^{i \theta}\) and

\[
 b_c=(\xi-1)\sum_{j=0}^{63}c_j\xi^j.                            \tag{5.1}
\]

Taylor expansion gives

\[
 G_c=b_c+L_c(s)+R_c(s).
\]

The Dirichlet Poincare inequality and \(\sin(\pi/128)>1/45\) give

\[
 |R_c(s)|\le512\|x\|_2^2.                                      \tag{5.2}
\]

The twisted-Dirichlet operator estimate, valid for general \(n\), is

\[
 |L_c(s)|\le\sqrt{2n}\cos\frac\pi{2n}\,\|x\|_2.               \tag{5.3}
\]

For \(n=64\), the coefficient is less than \(23/2\). Thus every competitive code satisfies

\[
 |b_c|<8.2\cdot10^{-10}.                                       \tag{5.4}
\]

Since \(\lvert \xi-1\rvert=2\sin(\pi/128)>0.049\), it must satisfy

\[
 \left|\sum_{j=0}^{63}c_j\xi^j\right|<1.7\cdot10^{-8}.          \tag{5.5}
\]

\subsection{Orthogonal reflection-pair decomposition}

This is the new device that makes a full \(n=64\) code certificate feasible. Pair indices \(j\) and \(63-j\), for \(0\le j\le 31\), and put

\[
 \beta_j=\left(j-\frac{63}{2}\right)\theta,
\]

\[
 p_j=\frac{c_j+c_{63-j}}2,
 \qquad
 q_j=\frac{c_j-c_{63-j}}2.
\]

For each \(j\), exactly one of \(p_j,q_j\) is zero and the other belongs to \(\{\pm 1\}\). Direct calculation gives

\[
 \sum_{j=0}^{63}c_j\xi^j
 =2e^{63i\theta/2}
 \left(\sum_{j=0}^{31}p_j\cos\beta_j
       +i\sum_{j=0}^{31}q_j\sin\beta_j\right).                 \tag{6.1}
\]

Thus the real and imaginary coordinates after a fixed rotation are two orthogonal one-dimensional ternary subset sums, with complementary supports. Conversely every pair of ternary coefficient vectors with complementary supports determines exactly one 64-bit code. Hence (6.1) is a bijective reparameterization of all

\[
 4^{32}=2^{64}
\]

half-codes, not a symmetry restriction.

\subsection{Exact exhaustive code certificate}

At scale \(M=10^{16}\), the core analytic verifier proves that each of the 32 horizontal and 32 vertical weights in (6.1) lies within \(1/M\) of the integer printed in \texttt{n64\_code\_fixed.cpp}. Split each ternary sum into two blocks of 16 terms. Each block has only

\[
 3^{16}=43,046,721
\]

states. Sorting the two block lists and using a moving interval window enumerates all coordinate sums within the conservative fixed-point radius. Horizontal and vertical records are then matched exactly by complementary support masks.

If the true complex sum satisfies (5.5), its fixed-point coordinate vector is inside the integer circle used by the program; the accumulated rounding error is less than 64 integer units. Therefore the scan cannot omit a competitive code.

The exact integer result is

\[
 \boxed{896\text{ half-codes survive}.}                         \tag{7.1}
\]

They form exactly six full-dihedral orbits, with orbit sizes

\[
 128,128,128,128,128,256.                                      \tag{7.2}
\]

Five orbits contain reflection-symmetric representatives; the last orbit is generic. The winning published axial code is the first orbit.

The recorded packaging run took 20.43 seconds and 1.47 GB; runtime is environment dependent and is not a proof assertion. The decomposition replaces an infeasible direct scan of \(2^{63}\) normalized codes.

\subsection{Exact elimination of the five nonwinning orbits}

For a fixed code define

\[
 \sigma_c=\sup_{s\ne0}\frac{|L_c(s)|}{\|x\|_2}.
\]

As in the \(n=32\) proof,

\[
 \sigma_c^2=\lambda_{\max}(B_cD^{-1}B_c^T),                    \tag{8.1}
\]

where \(D\) is the 63-by-63 Dirichlet path matrix and

\[
 (D^{-1})_{jk}=\frac{\min(j,k)(64-\max(j,k))}{64}.              \tag{8.2}
\]

Exact fixed-point Gershgorin bounds prove for all six representatives

\[
 \sigma_c^2<36.                                                 \tag{8.3}
\]

For the five nonwinning representatives, exact regular-residual lower bounds are respectively

\[
 |b_c|>
 4.86,\ 5.08,\ 6.28,\ 7.64,\ 8.23
 \quad\text{times }10^{-10}.                                   \tag{8.4}
\]

For these residual comparisons the core post-verifier uses a 256-unit fixed-point padding, larger than the generic Euclidean rounding bound \(128\sqrt2\). If one of these codes had deficit at most \(\varepsilon\), equations (4.4), (5.2), (8.3) would imply

\[
 \|x\|_2>
 \frac{|b_c|-512R_0^2}{6}.                                     \tag{8.5}
\]

Combining (8.5) with strong concavity gives deficits exceeding \(\varepsilon\); the smallest certified excess is

\[
 8.50\cdot10^{-24}.                                             \tag{8.6}
\]

Therefore only the winning orbit can contain a global maximizer.

\subsection{Continuous global uniqueness inside the winning orbit}

Use the axial representative from Section 2. It has 27 switch indices

\[
 J=\{j:c_{j-1}\ne c_j\}.
\]

Let \(f=-F\) and \(g=(\Re G_c,\Im G_c)\).

\subsubsection{Strong convexity}

The Hessian of \(f\) is the weighted Dirichlet path matrix with weights

\[
 w_j=\frac12\sin\frac{\alpha_j}{2}.
\]

Using (4.3) and \(\sin(\pi/128)>1/45\), throughout the high-perimeter ball

\[
 \nabla^2f\succeq m_0I,
 \qquad
 m_0>
 \frac{2}{45^2}
 \left(0.0225-\frac{0.0225^3}{6}\right).                       \tag{9.1}
\]

\subsubsection{Constraint Jacobian and multiplier bound}

The nonzero columns of \(Dg\) are \(a_j i e^{i\phi_j}\). Therefore

\[
 \sigma_{\min}(Dg)^2
 =2\left(27-\left|\sum_{j\in J}e^{2i\phi_j}\right|\right).     \tag{9.2}
\]

At the regular point the exact certificate proves

\[
 \left|\sum_{j\in J}e^{2ij\theta}\right|<5.                    \tag{9.3}
\]

Poincare gives \(\lVert s\rVert<23R_0\), so the root sum changes by less than \(276R_0<1\). Hence its norm remains below six and

\[
 \sigma_{\min}(Dg)>6.                                          \tag{9.4}
\]

The objective gradient obeys

\[
 \|\nabla f\|_2
 \le\sin(0.054/2)\|x\|_2
 <(0.054/2)R_0.
\]

At a KKT point,

\[
 \|y\|<Y_0:=\frac{(0.054/2)R_0}{6}.                            \tag{9.5}
\]

\subsubsection{Two KKT points cannot coexist}

If \(u,v\) are feasible high-perimeter KKT points and \(d=u-v\), strong convexity gives

\[
 d^T(\nabla f(u)-\nabla f(v))\ge m_0\|d\|^2.                   \tag{9.6}
\]

The scalar exponential remainder and \(\lvert a_j\rvert\le 2\) give

\[
 \|g(u)-g(v)-Dg(v)d\|\le\|d\|^2,                               \tag{9.7}
\]

and the analogous reverse estimate. Subtracting the two stationarity equations and using feasibility yields

\[
 m_0\|d\|^2\le2Y_0\|d\|^2.                                   \tag{9.8}
\]

The exact verifier proves

\[
 m_0-2Y_0>2.22\cdot10^{-5}>0.                                 \tag{9.9}
\]

Thus \(d=0\). The winning orbit has at most one high-perimeter KKT point after fixing rotation.

\subsection{Completion}

The original problem has a global maximizer by compactness. Sections 2--3 show every global maximizer has a saturated 128-vertex difference body. Sections 4--8 prove that its code belongs to the winning orbit. Its gaps are strictly inside (4.3), and (9.4) gives LICQ, so it is a KKT point. Section 9 proves that this point is unique after normalization. Dihedral code changes, rotation, reflection and translation give congruent polygons.

Therefore there is exactly one maximizing congruence class of small 64-gons.

\subsection{Machine-assisted components}

Analytic:

\begin{enumerate}
\def\labelenumi{\arabic{enumi}.}
\tightlist
\item
  difference-body reconstruction;
\item
  at-most-one-interior-vertex perturbation;
\item
  saturation via normal-cone localization and KKT;
\item
  high-perimeter angle localization;
\item
  uniform regular-residual screen;
\item
  reflection-pair orthogonal decomposition;
\item
  nonbest-orbit lower-deficit inequalities;
\item
  best-orbit KKT uniqueness.
\end{enumerate}

Finite certificate:

\begin{enumerate}
\def\labelenumi{\arabic{enumi}.}
\tightlist
\item
  rigorous one-dimensional feasible root and deficit;
\item
  root-of-unity fixed-point enclosures;
\item
  exhaustive coverage of all \(2^{64}\) codes via two ternary MITM scans;
\item
  partition of 896 survivors into six dihedral orbits;
\item
  residual lower bounds and operator-norm upper bounds;
\item
  switch-root bound for continuous uniqueness.
\end{enumerate}

\subsection{Certificate audit and cross-checks}

The ancillary package includes the analytic verifier, the C++17 exhaustive
scan, the post-verifier, their recorded outputs and hashes, and separately
organized survivor, high-precision KKT, and orthogonal-pair cross-checks.
Together they reproduce the 896 survivors, their six dihedral orbits, and the
winning orbit used above.  An interval-Newton enclosure would only be needed
to certify additional decimal coordinates of the optimizer.

%% file: main.bbl
\begin{thebibliography}{10}

\bibitem{reinhardt1922}
K.~Reinhardt.
\newblock Extremale Polygone gegebenen Durchmessers.
\newblock \emph{Jahresbericht der Deutschen Mathematiker-Vereinigung}, 31:251--270, 1922.

\bibitem{mossinghoff2006}
M.~J. Mossinghoff.
\newblock A \$1 problem.
\newblock \emph{American Mathematical Monthly}, 113(5):385--402, 2006.
\newblock doi:10.1080/00029890.2006.11920320.

\bibitem{mossinghoff2008}
M.~J. Mossinghoff.
\newblock An isodiametric problem for equilateral polygons.
\newblock In T.~Amdeberhan and V.~H. Moll, editors, \emph{Tapas in Experimental Mathematics}, volume 457 of \emph{Contemporary Mathematics}, pages 237--252. American Mathematical Society, Providence, RI, 2008.
\newblock doi:10.1090/conm/457/08913.

\bibitem{haremossinghoff2013}
K.~G. Hare and M.~J. Mossinghoff.
\newblock Sporadic Reinhardt polygons.
\newblock \emph{Discrete \& Computational Geometry}, 49(3):540--557, 2013.
\newblock doi:10.1007/s00454-012-9479-4.

\bibitem{haremossinghoff2019}
K.~G. Hare and M.~J. Mossinghoff.
\newblock Most Reinhardt polygons are sporadic.
\newblock \emph{Geometriae Dedicata}, 198:1--18, 2019.
\newblock doi:10.1007/s10711-018-0326-5.

\bibitem{bingane2022}
C.~Bingane.
\newblock Tight bounds on the maximal perimeter and the maximal width of convex small polygons.
\newblock \emph{Journal of Global Optimization}, 84(4):1033--1051, 2022.
\newblock doi:10.1007/s10898-022-01181-9.

\bibitem{binganeaudet2022}
C.~Bingane and C.~Audet.
\newblock Tight bounds on the maximal perimeter of convex equilateral small polygons.
\newblock \emph{Archiv der Mathematik}, 119(3):325--336, 2022.
\newblock doi:10.1007/s00013-022-01745-x.

\bibitem{bingane2023}
C.~Bingane.
\newblock Maximal perimeter and maximal width of a convex small polygon.
\newblock arXiv:2106.11831, version 2, 2023.

\bibitem{mulanskypotschka2025}
B.~Mulansky and A.~Potschka.
\newblock A zonogon approach for computing small convex polygons of maximum perimeter.
\newblock \emph{Mathematical Programming}, 2025.
\newblock doi:10.1007/s10107-025-02244-x.

\bibitem{mulanskypotschka2025correction}
B.~Mulansky and A.~Potschka.
\newblock Correction: A zonogon approach for computing small convex polygons of maximum perimeter.
\newblock \emph{Mathematical Programming}, 2025.
\newblock doi:10.1007/s10107-025-02257-6.

\end{thebibliography}
